\theoremstyle{definition}
\newtheorem{proposition}{Proposition}
\newtheorem{corollary}{Corollary}
\newtheorem{remark}{Remark}
\theoremstyle{definition}
\providecommand{\url}[1]{#1}
\title{Multi-Mode Pinching Antenna Systems Enabled Multi-User Communications}
\author{
Xiaoxia Xu, 
Xidong Mu,  
Yuanwei Liu, \textit{Fellow, IEEE}, 
and Arumugam Nallanathan, \textit{Fellow, IEEE}
\vspace{-0.6cm}
\thanks{X. Xu and A. Nallanathan are with the School of Electronic Engineering and Computer Science, Queen Mary University of
London, London E1 4NS, U.K. (email: \{x.xiaoxia, a.nallanathan\}@qmul.ac.uk).}
\thanks{X. Mu is with the Centre for Wireless Innovation (CWI), Queen's University Belfast, Belfast, BT3 9DT, U.K. (x.mu@qub.ac.uk)}
\thanks{Y. Liu is with the Department of Electrical and Electronic Engineering (EEE), The University of Hong Kong, Hong Kong (e-mail: yuanwei@hku.hk).}
}
\date{\today}
\begin{document}

\maketitle

\begin{abstract}
    This paper proposes a novel multi-mode pinching-antenna systems (PASS) framework. 
    Multiple data streams can be transmitted within a \emph{single} waveguide through multiple guided modes, 
    thus facilitating efficient multi-user communications through the \textit{mode-domain multiplexing}.
    A physic model is derived, which reveals the mode-selective power radiation feature of pinching antennas (PAs).
    To examine the performance, a two-mode PASS enabled two-user downlink communication system is investigated. 
    Considering the mode selectivity of PA power radiation, a practical PA grouping scheme is proposed,  
    where each PA group matches with one specific guided mode and mainly radiates its signal sequentially. 
    Depending on whether the guided mode leaks power to unmatched PAs or not, the proposed PA grouping scheme operates in either the \emph{non-leakage} or \emph{weak-leakage} regime.
    Based on this, the baseband beamforming and PA locations are jointly optimized for sum rate maximization, subject to each user's minimum rate requirement. 
    1) A simple two-PA case in the non-leakage regime is first considered. To solve the formulated problem, a channel orthogonality based solution is proposed. 
    The channel orthogonality is ensured by large-scale and wavelength-scale equality constraints on PA locations.
    Thus, the optimal beamforming reduces to maximum-ratio transmission (MRT).  
    Moreover, the optimal PA locations are obtained via a Newton-based one-dimension search algorithm that enforces two-scale PA-location constraints by Newton's method. 
    2) A general multi-PA case in both non-leakage and weak-leakage regimes is further considered. A low-complexity particle-swarm optimization with zero-forcing beamforming (PSO-ZF) algorithm is developed, thus effectively 
    tackling the high-oscillatory and strong-coupled problem. 
    Simulation results demonstrate the superiority of the proposed multi-mode PASS over conventional single-mode PASS and fixed-antenna structures.
\end{abstract}

\begin{IEEEkeywords}
Multi-mode pinching-antenna system (PASS), mode-domain multiplexing, optimization, pinching antenna.
\end{IEEEkeywords}

\section{Introduction}
The ever-increasing demands of the sixth-generation (6G) wireless networks for ultra-high data rates, low latency, and massive connectivity 
have driven the development of new antenna architectures with stringent cost and energy constraints \cite{Heath2016Overview,Rappaport2013Millimeter}. 
While massive multiple-input multiple-output (MIMO) systems \cite{Marzetta2010Massive} and hybrid beamforming \cite{ElAyach2014SpatiallySparse} 
have demonstrated significant potential in millimeter-wave (mmWave) and terahertz (THz) bands, 
their implementation is hindered by hardware complexity, high power consumption, 
and limited adaptability to dynamic environments.
Recent interests in flexible and reconfigurable architectures such as intelligent reflecting surfaces (IRS) \cite{Wu2019RIS}, fluid 
antennas \cite{Wong2021Fluid}, movable antennas \cite{Zhu2024MovableAntenna}, dynamic metasurface antenna \cite{Shlezinger2021Dynamic}, 
and large intelligent surfaces (LIS) \cite{Hu2018LIS} highlight the need to move beyond traditional MIMO antenna arrays. 

The afore-mentioned systems mainly focus on customizing the scattering effects of radio environments. 
To enable flexible control over a large-scale spatial range and mitigate line-of-sight (LoS) blockages,  
\emph{pinching antenna systems} (PASS) have emerged as a novel architecture, 
which reaps both benefits of low-loss wired guided-wave communication and mobile-enabled wireless communication \cite{Liu2025PASS_Magzine,Liu2025pinching_tutorial}. 
First prototyped by NTT DOCOMO \cite{Suzuki2022Docomo}, PASS equips a lossless dielectric waveguide  
to carry electromagnetic (EM) waves over physical propagation distances that spans several meters to tens of meters. 
Multiple radiating elements, termed \emph{pinching antennas} (PAs), can be activated along the waveguide, 
thus extracting energy from the guided signal and radiate it into free space \cite{Ding2025PASS}. 
The activated PAs exchange the energy with the waveguide using EM coupling principles \cite{Wang2025Modeling,Huang1994CMT}.  
Hence, they can interact with the waveguide in a non-contact way, where the PA-waveguide spacing determines the coupling strength and radiation power \cite{Xu2025PASSPower}.  
Owing to the non-contact EM coupling nature and the operating principles, 
PAs can be easily placed at arbitrary locations along the propagation direction of the waveguide. 
This leads to several unique advantages of PASS \cite{Liu2025PASS_Magzine}. 
1) \textit{Large-scale path loss control.} 
By adjusting positions of PAs and placing PAs near the mobile user, large-scale path loss of wireless signal can be altered and reduced flexibly\cite{Xu2025Rate}.
2) \textit{Precise beamforming.} 
The flexible placement of PAs can adjust the signal phase and the array aperture, thus achieving precise spatial beamforming. 
Recent works have demonstrated the significant array gains \cite{Ouyang2025ArrayGain,Tyrovolas2025Performance} and  
high spectral efficiency \cite{Xu2025JointBF,Guo2025GPASS} of PASS.
3) \textit{Scalable architecture design.} Non-contact PAs can be easily removed and deactivated from the waveguide, 
thus enabling a scalable antenna architecture.

\subsection{Concept of Multi-Mode PASS}

\begin{figure*}[!t]
    \centering
    \includegraphics[width=0.9\linewidth]{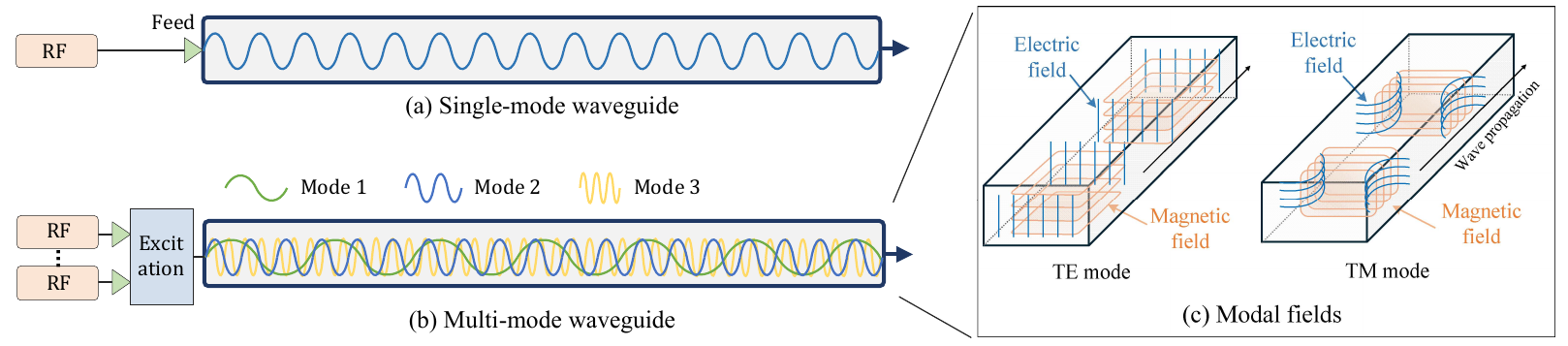}
    \caption{The concept of single-mode and multi-mode waveguides.}\label{fig:waveguide}
\end{figure*}

In existing PASS, each waveguide provides only one effective spatial dimension, leading to a rank-one spatial channel.
Hence, each waveguide supports at most one independent data stream through the spatial domain, which fundamentally constrains the system capacity and the number of served users. 
Although time-division or frequency-division multiple access can accommodate multiple users, 
these approaches suffer from reduced spectral efficiency. 
Moreover, the waveguide-division multiple access \cite{Zhao2025WDMA} and the joint pinching-and-transmit beamforming structure \cite{Liu2025PASS_Magzine} 
enable multi-user transmission by deploying multiple waveguides.
However, since connecting each additional user generally requires an extra waveguide,
the hardware cost is considerable and the resource utilization efficiency remains limited.

To overcome this degree-of-freedom (DoF) limitation, we propose a novel \emph{multi-mode PASS} framework. 
As illustrated in Fig. \ref{fig:waveguide}, the key idea is to exploit a multi-mode waveguide 
where signals over multiple guided modes can propagate simultaneously, 
thereby enabling the \emph{mode-domain multiplexing}. 
Specifically, a guided mode is a physically valid solution to Maxwell's equations under a given waveguide geometry, 
material composition, and boundary conditions.
Existing PASS typically exploits a single-mode waveguide \cite{Ding2025PASS,Liu2025PASS_Magzine,Liu2025pinching_tutorial}, 
which excites only the fundamental mode via one feed, as shown in Fig. \ref{fig:waveguide}(a). 
In contrast, the proposed multi-mode PASS excites multiple orthogonal guided modes by multiple feeds, as shown in Fig. \ref{fig:waveguide}(b).  
Each guided mode propagates with an invariant transverse field pattern, characterized by a distinct propagation constant and a unique field distribution. 
Thus, it enables each mode to carry an independent data stream with limited inter-mode interference. 
Here, Fig. \ref{fig:waveguide}(c) illustrates an example of two typical guided-mode families, namely transverse-electric (TE) and transverse-magnetic (TM) modes, 
which exhibit distinct transverse field patterns with vanishing longitudinal electric-field and magnetic-field components, respectively. 
The theoretical foundations of guided modes and modal orthogonality are well established 
in optical waveguide theory \cite{SnyderLove1983,Huang1994CMT}. 
Moreover, coupled-mode theory (CMT) suggests that efficient radiation from a desired guided mode 
requires phase matching between the PA and the corresponding modal field \cite{Huang1994CMT,Yariv1973CMT}. 
However, a physic model and systematic design methodology for multi-mode PASS are still lacking.

\subsection{Contributions}
To fill the gap, this paper proposes a novel multi-mode PASS framework. 
Multiple data streams can be efficiently transmitted using a single waveguide via the mode-domain multiplexing. 
We establish the fundamental physic model for the multi-mode PASS, which captures the mode selectivity of PA power radiation. 
We examine a two-mode PASS enabled two-user downlink communication system. 
A practical PA grouping scheme is proposed, where each PA group matches with a guided mode to sequentially radiate its signal. 
Depending on whether the guided mode leaks power to unmatched PAs or not, this scheme operates in \emph{non-leakage} and \emph{weak-leakage} regimes. 
Our main contributions can be summarized as follows. 
\begin{itemize}
    \item We propose a novel multi-mode PASS framework, where a single waveguide can simultaneously transmit multiple data streams via mode-domain multiplexing. 
    We derive a physic model to characterize the EM coupling between multi-mode waveguide and single-mode PAs, which reveals the mode selectivity of PA power radiation.  
    \item We examine the two-mode PASS performance in a downlink two-user communication system.  
    To handle the mode-selective PA power radiation, a practical PA grouping scheme is developed, where each PA group matches with a dedicated guided mode to sequentially radiate its signal.
    Based on this, we formulate the joint optimization problem of baseband beamforming and PA locations, which maximizes the system sum rate 
    subject to each user's minimum rate requirements. 
    \item We first consider a two-PA case in the non-leakage regime. To tackle the resultant nonconvex optimization problem, we propose a channel orthogonality based solution. 
    The proposed solution guarantees channel orthogonality by imposing large-scale and wavelength-scale equality conditions on PA locations. 
    Thus, the optimal beamforming reduces to maximum ratio transmission (MRT). 
    We develop a Newton-based one-dimension search algorithm, which enforces equality constraints via Newton's method and 
    explores optimal PA locations in the resulting one-dimension space.
    \item We then consider a general multi-PA case under both non-leakage and weak-leakage regimes. 
    The joint baseband beamforming and PA location optimization leads to a high-oscillatory, strongly coupled, and nonconvex problem, where many local optima exhibit.
    To avoid convergence to unfavorable local optima, we develop a low-complexity particle swarm optimization with zero-forcing beamforming (PSO-ZF) algorithm. 
    The proposed algorithm explores feasible PA positions using multiple particles, and adopts ZF beamforming for each particle to enable fast fitness evaluation. 
    \item Numerical results validate the effectivity of the proposed multi-mode PASS framework. 
    In both non-leakage and weak-leakage regimes, the proposed multi-mode PASS outperforms conventional time-division multiple access (TDMA)-based single-mode PASS 
    and fixed-antenna multiple-input single-output (MISO) structures. 
    Moreover, by initializing PSO-ZF via the optima from two-PA case, the weak-leakage regime leads to a slight performance loss compared to the non-leakage regime. 
\end{itemize}

The remainder of this paper is organized as follows. 
Section II illustrates the physic model for waveguide-to-PA radiation, and Section~III presents the signal model and formulates the joint optimization problem. 
Section~IV develops optimal solutions for a two-PA case, and Section V further investigates a general multi-PA case. 
Numerical results are presented in Section VI, and Section~VII finally concludes the paper.

\section{CMT-Based Physic Model for Multi-Mode PASS}
In this section, we first develop a physic model for a single-PA multi-mode PASS. Then, we demonstrate the mode-selectivity radiation feature and generalizes it to multi-PA case.
\subsection{Physic Model for Single PA}
We first provide a physic model for multi-mode PASS with a single PA, which captures the signal propagation within an $M$-mode waveguide. 
The signal is carried by a set of a $M$ guided modes, indexed by $\mathcal{M}=\{1,2,\dots,M\}$, which are excited by $M$ dedicated feeds at the entrance of the waveguide. 
The signal for guided mode $m$ is propagated along $x$-axis to $x_{\mathrm{PA}}$, and is then radiated by the PA located at $x_{\mathrm{PA}}$. 
The PA has a single eigenmode, which electromagnetively couples to the guided modes for signal radiation. 
The entire in-waveguide signal propagation process from feed $m$ to PA via guided mode $m$ is captured by $g_{m}$, which is defined as
\begin{equation}\label{waveguide_to_PA_single}
    g_{m} = \rho_{m}e^{-j\beta_{m}x_{\mathrm{PA}}}, ~ \forall m \in \mathcal{M},
\end{equation}
where $\rho_{m}\in\mathbb{C}$ is the power radiation coefficient for mode $m$, which characteristics both the power radiation ratio $|\rho_{m}|^2$ and the associated phase shift.
Moreover, $\beta_m$ denotes the propagation constant of guided mode $m$. 
For a dielectric waveguide, it is given by $\beta_m=n_{\mathrm{eff},m} k_0$, where $n_{\mathrm{eff},m}$ is the effective refractive index 
obtained by solving the corresponding transverse eigenvalue problem subject to the dielectric boundary conditions. 
$k_0=\frac{2\pi}{\lambda_{0}}$ is the free-space wavenumber, and $\lambda_{0}=\frac{c}{f_{c}}$ is the wavelength, with $f_{c}$ and $c$ being the carrier frequency and the speed of light. 

The mathematical expression of radiation coefficient $\rho_{m}$ can be derived based on the CMT, 
which characterizes the EM coupling between a multi-mode waveguide and a single-mode PA by coupled-mode equations (CMEs). 
Specifically, define $A_m(\xi)$ and $B(\xi)$ as the physical-field complex amplitudes of guided mode $m$ and the PA,  
where $\xi\in[0,L]$ denotes the local longitudinal coordinate along the PA within a coupling length $L$. 
From perturbative waveguide theory \cite[Sec.~13-3]{SnyderLove1983}, 
the standard CMEs of the multi-mode PASS are given by 
\begin{align}
    \frac{d A_m(\xi)}{d\xi}
        &= -j\beta_m A_m(\xi) - j\kappa_m B(\xi),
        \quad \forall m\in\mathcal{M}, \label{eq:CME_env_A}\\
    \frac{d B(\xi)}{d\xi}
        &= -j\beta^{\mathrm{PA}} B(\xi)
        - j\sum_{m\in\mathcal{M}}\kappa_m^{\ast} A_m(\xi), \label{eq:CME_env_B}
\end{align}
where $\kappa_m$ is the overlap-integral coupling strength, as derived from standard CMT~\cite{HausOpto,Huang1994CMT}. 
$\beta_m$ and $\beta^{\mathrm{PA}}$ denote the propagation constants of 
guided mode $m$ and the eigenmode of PA, respectively. 
To obtain closed-form solutions of the above CMT system, 
we further demonstrate that it can be efficiently decomposed into $M$ individual two-mode CMT subsystems. 

\begin{figure}[!t]
    \centering
    \includegraphics[width=1\linewidth]{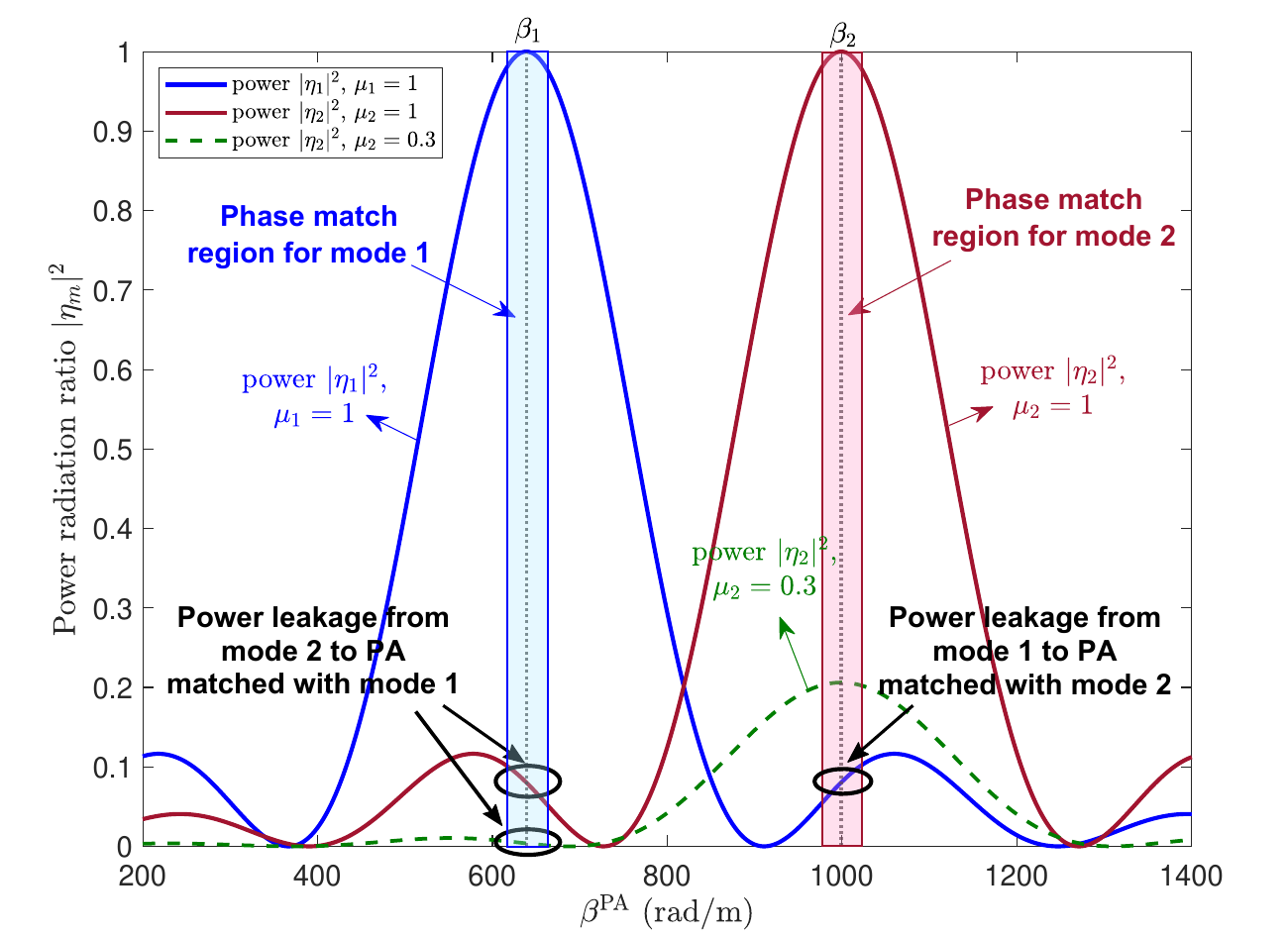}
    \caption{Power radiation ratio $|\eta_m|^2$ for each guided mode $m$ versus the propagation constant $\beta^{\mathrm{PA}}$  of a PA. 
    $\beta_1=638.8$ rad/m, $\beta_2=999.0$ rad/m, $\mu_1=1$, $\mu_{2}\in\{0.3,1\}$, and coupling strength $L=2$ cm. 
    The PA can predominantly match with one guided mode $m$ at $\beta^{\mathrm{PA}}\approx \beta_{m}$ to radiate its signal power, 
    while leaking a small amount of power from the unmatched mode,  
    which indicates the mode selectivity. 
    }\label{fig:phase_selectivity}
\end{figure}

\begin{proposition}[Decomposition into $M$ Two-Mode Subsystems]
\label{prop:MS_two_mode}
Under mode-separation and weak inter-mode transfer assumptions, 
the multi-mode CME system \eqref{eq:CME_env_A}-\eqref{eq:CME_env_B}
approximately decomposes into $M$ independent two-mode subsystems. 
Then, the power radiation coefficient in \eqref{waveguide_to_PA_single} is given by
\begin{equation}\label{eq:MS_g_singlePA}
    \rho_{m}= \eta_{m}\triangleq\frac{\kappa_{m}}{\phi_{m}} \sin(\phi_{m} L).
\end{equation}
where $\phi_m\triangleq \sqrt{|\kappa_{m}|^2 + \Bigl(\frac{\Delta\beta_m}{2}\Bigr)^2}\in\mathbb{R}$ is the generalized coupling strength, $\eta_{m}$ is the coupling coefficient, 
and $\Delta\beta_m \triangleq \beta^{\mathrm{PA}}-\beta_m$ is the phase mismatch between guided mode $m$ and the PA. 
\begin{IEEEproof}
    See Appendix \ref{proof:prop:MS_two_mode}. 
\end{IEEEproof}
\end{proposition}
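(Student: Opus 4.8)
The plan is to exploit the structure of \eqref{eq:CME_env_A}--\eqref{eq:CME_env_B}: the $M$ guided modes never couple to one another directly, interacting only through the single PA envelope $B(\xi)$. First I would pass to the interaction picture, writing $A_m(\xi)=a_m(\xi)e^{-j\beta_m\xi}$ and $B(\xi)=b(\xi)e^{-j\beta^{\mathrm{PA}}\xi}$, which cancels the fast carriers and reduces the CMEs to $a_m'=-j\kappa_m\,b\,e^{-j\Delta\beta_m\xi}$ and $b'=-j\sum_{m'\in\mathcal{M}}\kappa_{m'}^{\ast}\,a_{m'}\,e^{j\Delta\beta_{m'}\xi}$, with $\Delta\beta_m=\beta^{\mathrm{PA}}-\beta_m$. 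Since the $b$-equation is linear in the sources $a_{m'}$ and $b(0)=0$ (the PA carries no field before entering the coupling region), the PA envelope splits \emph{exactly} as $b=\sum_{m'}b_{m'}$, where $b_{m'}$ is driven by mode $m'$ alone. The only approximation enters the $a_m$-equation: the feedback $-j\kappa_m\sum_{m'}b_{m'}e^{-j\Delta\beta_m\xi}$ contains, for $m'\neq m$, the non-resonant phase $e^{j(\Delta\beta_{m'}-\Delta\beta_m)\xi}=e^{-j(\beta_{m'}-\beta_m)\xi}$; under the mode-separation hypothesis $|\beta_{m'}-\beta_m|$ is large, and an integration-by-parts (non-stationary-phase) estimate shows these cross-feedback terms are suppressed by factors of order $|\kappa_{m'}|/|\beta_{m'}-\beta_m|$ relative to the retained self term. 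Dropping them --- precisely the ``weak inter-mode transfer'' assumption --- closes the dynamics of each pair $(A_m,B_m)$, yielding $M$ independent two-mode subsystems; by linearity the total radiated PA field is $\sum_m\rho_m A_m(0)$, so each $\rho_m$ is extracted from its own subsystem.

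Second, I would solve each two-mode subsystem in closed form. It is the classical codirectional coupler $\frac{dA_m}{d\xi}=-j\beta_mA_m-j\kappa_mB_m$, $\frac{dB_m}{d\xi}=-j\beta^{\mathrm{PA}}B_m-j\kappa_m^{\ast}A_m$, to be solved on $\xi\in[0,L]$ with $B_m(0)=0$ and $A_m(0)$ the amplitude of mode $m$ incident on the PA. Factoring out $e^{-j\bar\beta_m\xi}$ with $\bar\beta_m=\tfrac12(\beta_m+\beta^{\mathrm{PA}})$ makes the generator constant --- a $2\times2$ matrix with $\pm\Delta\beta_m/2$ on the diagonal and $\kappa_m,\kappa_m^{\ast}$ off-diagonal --- whose eigenvalues are $\pm\phi_m$ with $\phi_m=\sqrt{|\kappa_m|^2+(\Delta\beta_m/2)^2}$. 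Because both terms under the radical are nonnegative reals, $\phi_m\in\mathbb{R}$, as claimed (the coupler is always oscillatory, never evanescent, in contrast to a contra-directional coupler). Exponentiating the generator --- equivalently, noting that the PA envelope in this frame solves a harmonic-oscillator equation with frequency $\phi_m$ --- the off-diagonal entry of the transfer matrix gives $B_m(L)=\eta_mA_m(0)$ with $\eta_m$ proportional to $\frac{\kappa_m}{\phi_m}\sin(\phi_mL)$; after absorbing the resulting unit-modulus prefactor (and the coupling-phase convention) into the propagation factor $e^{-j\beta_mx_{\mathrm{PA}}}$ of $g_m$, this is exactly \eqref{eq:MS_g_singlePA}, and identifying the transfer coefficient with $\rho_m$ in \eqref{waveguide_to_PA_single} completes the proof.

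I expect the genuine difficulty to lie entirely in the decoupling step: quantifying in what norm, and under which joint scaling of $\{|\kappa_m|\}$, $\{|\beta_m-\beta_{m'}|\}_{m\neq m'}$, and $L$, the discarded cross terms are truly subdominant. The clean route is a non-stationary-phase bound --- each dropped contribution to $a_m(\xi)$ is a double integral whose phase derivative is bounded below by $\min_{m'\neq m}|\beta_m-\beta_{m'}|>0$, so repeated integration by parts (using that $a_{m'}$ is slowly varying in the weak-coupling regime $|\kappa_{m'}|L\ll1$) bounds it by a constant times $|\kappa_m\kappa_{m'}|L\big/|\beta_m-\beta_{m'}|$, which is negligible against the retained term exactly when mode separation $|\beta_m-\beta_{m'}|\gg\max_{m''}|\kappa_{m''}|$ holds. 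An equivalent and arguably more transparent reading is that the off-resonant modes merely renormalize $\beta^{\mathrm{PA}}$ by an $O(|\kappa|^2/\Delta\beta)$ shift, which we neglect. Everything downstream of the decoupling is a routine constant-coefficient ODE computation, so the write-up should concentrate on stating the two approximations precisely and verifying this single error bound.
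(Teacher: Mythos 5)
Your proposal is correct and follows essentially the same route as the paper's Appendix~\ref{proof:prop:MS_two_mode}: pass to slowly varying envelopes, discard the PA-mediated cross terms via an integration-by-parts (non-stationary-phase) bound of order $|\kappa_i\kappa_m|/|\beta_i-\beta_m|$ justified by the mode-separation and weak-transfer assumptions, then solve each resulting two-mode coupler with the symmetrized constant generator whose eigenvalues are $\pm\phi_m$, absorbing the residual $e^{-j\Delta\beta_m L/2}$ phase into $\kappa_m$. The only cosmetic difference is that you organize the decoupling at the level of the first-order system (splitting $b=\sum_{m'}b_{m'}$ by linearity) whereas the paper derives a second-order ODE for each $a_i$ and identifies the cross-interaction terms there; the key estimate and the closed-form solution are identical.
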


\begin{remark}\label{remark:phase_matching}
    When phases are matched, i.e., $\Delta\beta_{m}=\beta_m-\beta^{\mathrm{PA}}=0$, 
    $\eta_{m}$ achieves its maximum:
    \begin{equation}
    \eta_{m}=\sin\!\left(|\kappa_{m}|L\right),
    \end{equation}
    by engineering PA to make $\angle\kappa_{m}=0$ at the matched phase.
\end{remark}
    
\textbf{Mode Selectivity in PA Power Radiation:} Eq.~\eqref{eq:MS_g_singlePA} reveals the \emph{mode selectivity} of PA power radiation. 
To further illustrate it, Fig.~\ref{fig:phase_selectivity} provides an example of power radiation ratios 
for two guided modes versus the propagation constant $\beta^{\mathrm{PA}}$ at one PA. 
The coupling strength is modelled as $|\kappa_m|=\mu_m|\kappa_{m,0}|$, 
where $|\kappa_{m,0}|$ is the maximum coupling strength for mode $m$ under an ideal field matching, 
and $\mu_m\in[0,1]$ denotes the field selectivity of the PA (including spatial overlap, polarization mismatch, and structural asymmetry). 
It can be observed that the PA can predominantly match with one guided mode $m$ at $\beta^{\mathrm{PA}}\approx \beta_{m}$ 
to realize high power radiation (i.e., the phase match region in Fig.~\ref{fig:phase_selectivity}), 
which indicates the mode selectivity of PA power radiation. 
In this case, only a small amount of signal power from the unmatched mode can be radiated, which is termed as \textit{power leakage} in multi-mode PASS, as illustrated in Fig.~\ref{fig:phase_selectivity}. 
Moreover, the power leakage effect from an unmatched mode $m$ to the PA can be decreased by reducing the field selectivity $\mu_{m}$. 

\subsection{Physic Model for Multiple PAs} 
We then consider the multi-PA case, where $N$ PAs are deployed along the multi-mode waveguide. 
The signal launched by feed $m$ is propagated via guided mode $m$ and radiated sequentially by PAs $1,2,\dots,N$. 
The signal propagation from feed $m$ to PA $n$ via guided mode $m$ is represented by
\begin{equation}\label{eq:g_derivation}
    g_{n,m} = \rho_{n,m}e^{-j\beta_{m}x_{n}} = \eta_{n,m}\overline{\eta}_{n,m}^{\mathrm{in}}e^{-j\beta_{m}x_{n}},
\end{equation}
where $\eta_{n,m}$ characterizes the EM coupling between PA $n$ and the residual in-waveguide signal of guided mode $m$, 
and $\overline{\eta}_{n,m}^{\mathrm{in}}$ denotes the complex amplitude of incident signal for guided mode $m$ at PA $n$. 
From \textbf{Proposition \ref{prop:MS_two_mode}},  $\eta_{n,m}$ is given by
\begin{equation}\label{eq:MS_g_final}
\eta_{n,m}=\frac{\kappa_{n,m}}{\phi_{n,m}} \sin(\phi_{n,m} L),
\end{equation}
where $\phi_{n,m}\triangleq \sqrt{|\kappa_{n,m}|^2 + \Bigl(\frac{\Delta\beta_{n,m}}{2}\Bigr)^2}\in\mathbb{R}$ is the generalized coupling strength at PA $n$, 
and $\beta_{n,m}\triangleq \beta_{n}^{\mathrm{PA}}-\beta_m$ is the phase mismatch. 
Specifically, $|\eta_{n,m}|^2$ denotes the power radiation fraction relative to the incident power from guided mode $m$ at PA $n$,
while $1-|\eta_{n,m}|^2$ corresponds to the remaining fraction in the waveguide. 
Hence, the incident power ratio at PA $n$ relative to the initially launched modal power $|c_{m}|^2$ is obtained recursively by 
$\Big|\overline{\eta}_{n,m}^{\mathrm{in}}\Big|^2 = \prod\limits_{i<n}\left(1-|\eta_{i,m}|^{2}\right)$, which yields
\begin{equation}\label{power_ratio_recursive}
    \overline{\eta}_{n,m}^{\mathrm{in}} = \prod_{i<n}\sqrt{\left(1-|\eta_{i,m}|^{2}\right)}.
\end{equation}
Given the mode selectivity of PA power radiation, adjusting multi-mode power radiation ratios for multiple PAs are quite complicated, 
which requires sophisticated designs for the multi-mode PASS. 
In the next section, we will propose a practical scheme that achieves average power radiation and tractable implementations. 

\section{Signal Model and Problem Formulation}
\label{sec:SystemModel}

\begin{figure}[!t]
    \centering
    \includegraphics[width=1\linewidth]{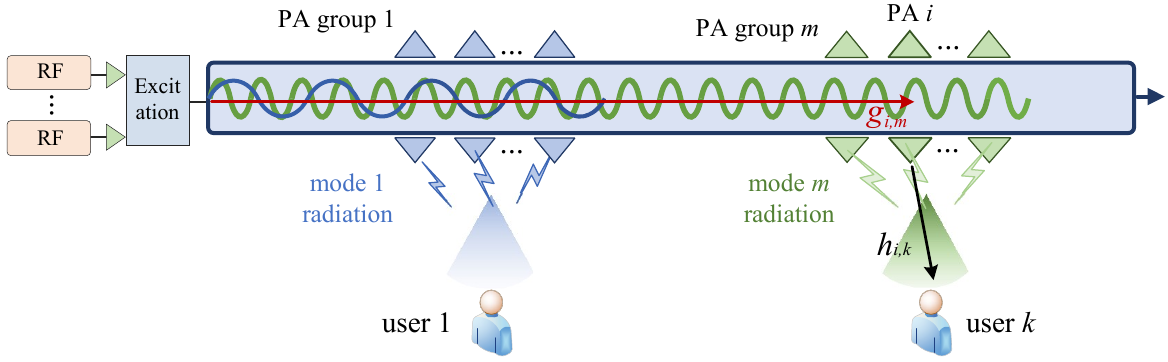}
    \caption{The proposed multi-mode PASS framework.}\label{fig:multimodePASS}
\end{figure}

To investigate the performance of multi-mode PASS, we consider a two-user downlink communication system,  
which simultaneously transmits two independent data streams using only a single waveguide, as shown in Fig. \ref{fig:multimodePASS}. 
Specifically, the signals are carried by $M=2$ guided modes. Each guided mode is excited via an individual feed point, which is termed a feed throughout the paper. 
A set of $N$ PAs, indexed by $\mathcal{N}=\{1,2,\dots,N\}$, are activated along the waveguide, which radiate the multi-mode guided signals into free space. 
The radiated wireless signals are then received by $K=2$ single-antenna mobile users in the downlink.
The sets of guided modes and users are denoted by $\mathcal{M}=\{1,2\}$ and $\mathcal{K}=\{1,2\}$. 
A three-dimension coordinate system is considered. The position of user $k$ is defined as $(a_{k}, y_{k}, 0)$.
The location of PA $n$ is denoted by $(x_{n},0,h_{\mathrm{PA}})$, 
where the coordinate along the $x$-axis is adjustable within $x_{n}\in[0,L_{\mathrm{wg}}]$, and the height $h_{\mathrm{PA}}$ and the waveguide length $L_{\mathrm{wg}}$ are fixed.

\subsection{Signal Model}
The complex data symbols intended for users are denoted by $\mathbf{s} = [s_1, s_2]^{\mathsf T} \in \mathbb{C}^{2\times 1}$.
A digital baseband precoder $\mathbf{W} \in \mathbb{C}^{2\times 2}$ maps data symbol vector $\mathbf{s}$ to the baseband signals that are input to $N_{\mathrm{RF}}$ RF chains. 
Without loss of generality, $N_{\mathrm{RF}}=K=2$ RF chains are adopted to satisfy the minimum requirements of spatial multiplexing. 
At the waveguide entrance, a multi-port launcher excites the target $M$ guided modes (e.g., the TE$_{11}$ and TM$_{01}$ modes) through $M$ feeds.
The resulting modal excitation vector is given by
\begin{align}
    \mathbf c
    = [c_1, c_2]^{\mathsf T}
    = \mathbf{V} \mathbf{W} \mathbf{s} =\mathbf{W} \mathbf{s},
\end{align}
where $c_m$ denotes the complex signal of guided mode $m$. 
We assume that the launcher has been properly calibrated such that the RF-chain outputs are directly mapped to the modal
excitations. Hence, the launcher excitation matrix 
$\mathbf{V}\in\mathbb{C}^{M\times N_{\mathrm{RF}}} = \mathbf{I}_{2\times 2}$ 
reduces to the identity matrix, where each feed can excite one guided mode.

The in-waveguide propagation to PA $n$ via guided mode $m$ is captured by $g_{n,m}$ as:
\begin{equation}
    g_{n,m} = \rho_{n,m}e^{-j\beta_{m}x_{n}}, ~ \forall n \in \mathcal{N}, ~ m \in \mathcal{M},
\end{equation}
where $\rho_{n,m}$ is given by \eqref{eq:g_derivation}. 
Thereafter, PA $n$ further radiates the coupled signals into free space, which are further received by user $k$ through wireless channel. 
The LoS-dominant wireless channel from PA $n$ to user $k$ is given by the geometric free-space spherical-wavefront model \cite{Zhang2022Beamfocusing}:
\begin{equation}\label{eq:h_definition}
    h_{n,k}=\frac{\lambda_{0}}{4\pi}\frac{e^{jk_0 R_{n,k}}}{R_{n,k}},
\end{equation}
where $R_{n,k}=\sqrt{(x_{n}-a_{k})^2+y_{k}^{2}+h_{\mathrm{PA}}^2}$ is the distance between PA $n$ and user $k$. 

Hence, the received signal at user $k$ is given by 
\begin{align}
    y_{k}=\underset{\text{desired signal}}{\underbrace{\sum_{m\in\mathcal{M}}\mathbf{h}_{k}^{\mathsf H}\mathbf{g}_{m}w_{m,k}s_{k}}}
    +\underset{\text{interference}}{\underbrace{\sum_{k'\ne k}\sum_{m\in\mathcal{M}}\mathbf{h}_{k}^{\mathsf H}\mathbf{g}_{m}w_{m,k'}s_{k'}}}+n_{k},
\end{align}
where $\mathbf{g}_m\triangleq[g_{1,m},\dots,g_{N,m}]^{\mathsf T}\in\mathbb{C}^{N\times 1}$ is in-waveguide propagation vector of guided mode $m$, 
$\mathbf{h}_k\triangleq[h_{1,k},\dots,h_{N,k}]^{\mathsf T}\in\mathbb{C}^{N\times 1}$ is wireless channel vector of user $k$, 
and $n_{k}\sim \mathcal{N}(0,\sigma^2)$ is the additive white Gausian noise (AWGN). 
Let $\mathbf{G}=[g_{n,m}]\in\mathbb{C}^{N\times M}$  and $\mathbf{H}\triangleq \left[\mathbf h_1, \mathbf h_2\right] \in \mathbb{C}^{N \times K}$  
denote in-waveguide propagation and wireless channel matrix. 
Moreover, denote $\mathbf y=[y_1,y_2]^{\mathsf T}$ and $\mathbf n=[n_1,n_2]^{\mathsf T}$.
The received signals of users can be compactly written as 
\begin{align}\label{eq:signal}
    \mathbf{y} =
    \mathbf{H}^{\mathsf H}\mathbf{G}\mathbf{W}\mathbf{s} + \mathbf{n}.
\end{align}
Denote $\widetilde{\mathbf{h}}_{k}^{\mathsf H}\triangleq \mathbf{h}_{k}^{\mathsf{H}}\mathbf{G}\in\mathbb{C}^{1\times M}$. 
The signal-to-interference-and-noise ratio (SINR) of user $k$ is given by
\begin{equation}\label{SINR_definition}
    \mathrm{SINR}_{k}= \frac{|\widetilde{\mathbf{h}}_{k}^{\mathsf H}\mathbf{w}_{k}|^2}{|\widetilde{\mathbf{h}}_{k}^{\mathsf H}\mathbf{w}_{k'}|^2+\sigma^{2}}, 
    ~ k' \ne k.
\end{equation}

\subsection{PA Grouping and Power Leakage Regimes}
Owing to the mode selectivity of PA radiation (as demonstrated in Fig. \ref{fig:phase_selectivity}), 
we group multiple PAs to match different modes, thus enabling mode-selective power radiation with limited leakage.
The total number of $N$ PAs are divided into $M=2$ disjoint groups, denoted by $\mathcal{N}_{1}$ and $\mathcal{N}_{2}$, which satisfies
$\mathcal{N} = \mathcal{N}_{1} \cup \mathcal{N}_{2}$. 
PA group $m$ is designed to match guided mode $m$, 
where each PA $n\in\mathcal{N}_{m}$ satisfies $\beta_{n}^{\mathrm{PA}} = \beta_{m}$.
The phase mismatch $\Delta\beta_{n,m}$ between guided mode $m$ 
and the $n$-th PA in group $m'$ becomes
\begin{equation}
    \Delta\beta_{n,m}=\begin{cases}
        0, & m= m',\\
       \Delta\beta_{m,m'}, & m\neq m',
        \end{cases} 
        \qquad \forall n\in\mathcal{N}_{m'},
\end{equation}
where the phase difference $\Delta\beta_{m,m'} \triangleq\beta_{m}-\beta_{m'}$. 

For ease of design, we adopt an average power radiation within each PA group,   
where all PAs in group $m$ share an average power radiation ratio for guided mode $m$:
\begin{equation}\label{equal_power_radiation}
    |\rho_{1,m}| = |\rho_{2,m}| =  \dots = |\rho_{|I_{m}|,m}| = \frac{\rho_m^{\mathrm{group}}}{\sqrt{|\mathcal{N}_{m}|}}.
\end{equation}
$\rho_m^{\mathrm{group}}=\prod_{n'\notin\mathcal{N}_{m}}\sqrt{1-|\eta_{n',m}|^{2}}$ 
is the power ratio allocated to group $m$ by subtracting the leaked power of mode $m$.

\subsubsection{Non-Leakage Regime}
In this regime, each guided mode $m$ only transfers energy with PA group $m$, but not leaks power to unmatched PAs. 
The coupling coefficient $\eta_{n,m}$ is given by
\begin{equation}\label{coupling_non_leakage}
    \eta_{n,m}^{\mathrm{NL}}=\begin{cases}
        \sin(|\kappa_{n,m}|L), & n\in\mathcal{N}_{m},\\
        0, & n\notin\mathcal{N}_{m},
        \end{cases}
\end{equation} 
where coupling strength $|\kappa_{n,m}|$ is determined by the spacing $S_{n}$ between PA $n$ and the waveguide \cite[Proposition 1]{Xu2025PASSPower}: 
\begin{equation}\label{eq:kappa_match}
    |\kappa_{n,m}| =  e^{-\alpha_{m}S_{n}} |\Omega_{m}|,  \quad \forall n \in \mathcal{N}_{m}.
\end{equation}
The non-leakage regime leads to $\rho_{m}^{\mathrm{group}}=1$ since $\eta_{n,m}^{\mathrm{NL}}=0$ for any $n \notin \mathcal{N}_{m}$.
From the definition of \eqref{eq:g_derivation}, achieving average power radiation for group $m$ in \eqref{equal_power_radiation} requires that
\begin{equation} 
    \eta_{n,m}^{\mathrm{NL}} = \frac{1}{\sqrt{|\mathcal{N}_{m}|-n+1}}, ~\forall n \in \mathcal{N}_{m}.
\end{equation}
Therefore, the spacing between PA $n$ and the waveguide needs to be configured by
\begin{equation}\label{spacing_PA}
    S_{n}=-\frac{1}{\alpha_{m}}\ln\left(\frac{\arcsin\left(\eta_{n,m}^{\mathrm{NL}}\right)}{L|\Omega_{m}|}\right), 
    ~ \forall n \in \mathcal{N}_{m}.
\end{equation}
As a result, 
the effective channel coefficient is reduced to
\begin{equation}
    \widetilde{h}_{m,k}^{\mathrm{NL}}= \mathbf{h}_{k}^{\mathsf{H}}\mathbf{g}_{m}^{\mathrm{NL}} = 
    \sum_{n\in\mathcal{N}_{m}}\frac{\lambda_{0}}
    {4\pi R_{n,k}\sqrt{\left|\mathcal{N}_{m}\right|}}e^{-j(\frac{2\pi}{\lambda_{0}}R_{n,k}+\beta_{m}x_{n})}.
\end{equation}

\subsubsection{Weak-Leakage Regime}
For the weak-leakage regime, the coupling coefficient $\eta_{n,m}^{\mathrm{WL}}$ is given by
\begin{equation}\label{coupling_weak_leakage}
    \eta_{n,m}^{\mathrm{WL}}=\begin{cases}
        \sin(|\kappa_{n,m}|L), & n\in\mathcal{N}_{m},\\
        \frac{\sin\left(\sqrt{\left|\kappa_{n,m}\right|^{2}+\tfrac{\left|\Delta\beta_{n,m}\right|^{2}}{4}}L\right)\kappa_{n,m}}{\sqrt{\left|\kappa_{n,m}\right|^{2}+\tfrac{\left|\Delta\beta_{n,m}\right|^{2}}{4}}}, & n\notin\mathcal{N}_{m}.
        \end{cases}
\end{equation}
As in the non-leakage regime, we enforce average power radiation within each PA group. 
Specifically, for the PAs assigned to mode $m$, we set
\begin{equation}\label{eta_equal_weak_leakage}
    \eta_{n,m}^{\mathrm{WL}} = \frac{\rho_m^{\mathrm{group}}}{\sqrt{|\mathcal{N}_{m}|-n+1}}, \quad \forall n \in \mathcal{N}_{m}.
\end{equation}
which can be satisfied by choosing the PA-waveguide spacing $S_{n}$ similar to \eqref{spacing_PA}.  

To mitigate leakage, multiple PA groups are placed sequentially along the waveguide. 
With this arrangement, the power of guided mode $m$ is radiated by PA groups in a predetermined ordered, 
and thus can only leak to groups $m'<m$ that radiate earlier than the intended group $m$. 
This ordering also enables a tractable group-by-group design, i.e., once the spacing $\{S_{n}\}$ of a group are determined by \eqref{spacing_PA}, the corresponding coupling 
strength between PA $n$ and any unmatched mode $m$ (i.e., $n\notin \mathcal{N}_{m}$) are determined as 
$|\kappa_{n,m}| =  e^{-\alpha_{m}S_{n}} |\Omega_{m}|\mu_{m}^{\mathrm{unmatch}}$. 
Using \eqref{coupling_weak_leakage} and the fact that $\sin(\cdot)\leqslant 1$, 
the power leakage from mode $m$ to any unmatched PA $n \notin \mathcal{N}_{m}$ is upper bounded by 
\begin{equation}\label{upper_bound_power_leakage}
    |\eta_{n,m}^{\mathrm{WL}}|^2 \leqslant\frac{|\kappa_{n,m}|^2}{1+\Delta\beta_{m,m'}^{2}/4}, ~\forall n \notin \mathcal{N}_{m}.
\end{equation} 
Therefore, the leakage is inherently weak when the phase mismatch $|\Delta \beta_{m,m'}|$ is sufficiently large and 
the mismatch-induced field selectivity $\mu_{m}^{\mathrm{unmatch}}$ is small. 

As a result, the effective channel coefficient becomes 
\begin{equation}
    \begin{split}
        \widetilde{h}_{m,k}^{\mathrm{WL}}&=\mathbf{h}_{k}^{\mathsf{H}}\mathbf{g}_{m}^{\mathrm{WL}}
        =\sum_{n\in\mathcal{N}_{m}}\frac{\lambda_{0}\rho_{m}^{\mathrm{group}}}{4\pi R_{n,k}\sqrt{\left|\mathcal{N}_{m}\right|}}e^{-j(\frac{2\pi}{\lambda_{0}}R_{n,k}+\beta_{m}x_{n})}
        \\&+\sum_{n\in\mathcal{N}_{m'},~m'<m}\frac{\lambda_{0}}{4\pi R_{n,k}}\rho_{n,m}e^{-j(\frac{2\pi}{\lambda_{0}}R_{n,k}+\beta_{m}x_{n})}.
    \end{split}
\end{equation}

\subsection{Problem Formulation}
To maximize the system sum rate, 
we formulate the following joint PA position and baseband beamforming optimization problem 
subject to each user's minimum data rate requirement:
\begin{subequations}
    \label{eq:P0}
    \begin{align}
    \max_{\mathbf{W},\mathbf{x}}& ~\sum_{k\in\mathcal{K}} \log_2\!\big(1+\mathrm{SINR}_{k}\big)
    \label{eq:ms_bilevel_upper_obj}\\
    \mathrm{s.t.}\quad&
    \sum_{k} \|\mathbf{w}_{k}\|^2 \leqslant P_{\max}, \label{eq:P0_power} 
    \\& \mathrm{SINR}_{k} \geq \mathrm{SINR}_{k}^{\min}, ~\forall k\in\mathcal{K}, \label{eq:sinr_min_constraint}
    \\&x_{n+1}-x_{n} \geq \Delta_{\min},~ i=1,\ldots,I-1, \label{eq:P0_x_cons1} 
    \\& x_{\min}\le x_1 < \cdots < x_I\le x_{\max},\label{eq:P0_x_cons2} 
    \end{align}
\end{subequations}
where $\mathrm{SINR}_{k}$ is computed by $\widetilde{h}_{m,k}^{\mathrm{NL}}$ in non-leakage regime and $\widetilde{h}_{m,k}^{\mathrm{WL}}$ in weak-leakage regime, 
constraint \eqref{eq:P0_power} guarantees the maximum transmit power $P_{\max}$, 
constraint \eqref{eq:sinr_min_constraint} ensures the minimum rate requirement of each user $k$, 
constraint \eqref{eq:P0_x_cons1} indicates the minimum inter-PA spacing, 
and constraint \eqref{eq:P0_x_cons2} denotes the sequential ordering and motion ranges of PAs.

Note that the term $e^{-j\beta_m x_{n}}$ is a complex exponential function of PA position $x_{n}$. 
Hence, even a small wavelength-scale perturbation of the PA location may induce a significant phase rotation,  
making the equivalent channel highly oscillatory with respect to PA positions.
Moreover, the couped position and baseband beamforming need to be optimized in a joint and coordinated manner. 
Hence, problem \eqref{eq:P0} is a high-oscillatory, strong-coupled, and non-convex problem. 
In the sequel, we will first investigate the non-lekage two-PA case, and then consider a general multi-PA case.

\section{Solution for Non-Leakage, Two-PA Case}
To provide insights into the system design, in this section we consider  a two-PA, non-leakage case, and derive a channel orthogonality based solution to solve \eqref{eq:P0}. 
In this case, each PA group contains only one PA, i.e., $|\mathcal{N}_{1} |=|\mathcal{N}_{2}| = 1$. 
Hence, the $m$-th PA is directly matched to mode $m$ and radiates its signal at location $x_m$ along propagation direction. 
Moreover, the distance from PA $m$ to user $k$ is denoted by $R_{m,k}$. 
By ignoring power leakage, effective channel coefficient becomes
\begin{equation}
    \widetilde{h}_{m,k}=\frac{\lambda_{0} e^{-j\left(\frac{2\pi}{\lambda_{0}}R_{m,k}+\beta_{m}x_{m}\right)}}{4\pi R_{m,k}}.
\end{equation}


\subsection{Channel Orthogonality Conditions}
\subsubsection{Two-Scale Conditions}
To achieve efficient mode-domain multiplexing for simultaneous two-user transmissions, 
the proposed solution ensures the following channel orthogonality conditions, and thus realize interference nulling:
\begin{equation}\label{eq:P1_zero_intf}
\left|\mathbf{\widetilde{h}}_{1}^{H}\mathbf{\widetilde{h}}_{2}\right|^{2}=\left|\mathbf{\widetilde{h}}_{2}^{H}\mathbf{\widetilde{h}}_{1}\right|^{2}=0.
\end{equation}
In \eqref{eq:P1_zero_intf}, $\left|\mathbf{h}_{k'}^{H}\mathbf{h}_{k}\right|^2$ can be rearranged as 
\begin{equation}
        \left|\mathbf{h}_{k'}^{H}\mathbf{h}_{k}\right|^2
        =\frac{\lambda_{0}^{2}}{16\pi^{2}}\left|\sum_{m\in\mathcal{M}}\frac{1}{R_{m,1}R_{m,2}}e^{j\frac{2\pi}{\lambda_{0}}\left(R_{m,k'}-R_{m,k}\right)}\right|^2.
\end{equation}

Denote $b_{m}\triangleq\tfrac{1}{R_{m,1}R_{m,2}}$ and $\theta_{m}\triangleq\tfrac{2\pi}{\lambda_{0}}(R_{m,2}-R_{m,1})$. 
From triangle inequality, we have 
\begin{equation}
    \begin{split}
    \left|b_{1}e^{j\theta_{1}}+b_{2}e^{j\theta_{2}}\right|^{2}
    &=\left|b_{1}\right|^{2}+\left|b_2\right|^{2}+2\left|b_{1}\right|\left|b_{2}\right|\cos\left(\theta_{1}-\theta_{2}\right)\\
    &\geq\left(\left|b_{1}\right|-\left|b_{2}\right|\right)^{2},
\end{split}
\end{equation}
where the equality is achieved when $\cos\left(\theta_{1}-\theta_{2}\right)=-1$, i.e., $\theta_{1} - \theta_{2} = (2n+1)\pi$, $n\in\mathbb{Z}$, 
implying that the two-mode coherent waves with phase differences $\theta_{1}-\theta_{2}$ achieves destructive interference. 
Therefore, the interference nulling constraint \eqref{eq:P1_zero_intf} can be enforced by a magnitude equality constraint 
$|b_{1}| = |b_{2}|$ and a phase difference constraint $\theta_{1} - \theta_{2} = (2n+1)\pi$, which are equivalent to
\begin{equation}\label{P1_intf_nullling_largescale}
R_{1,1}R_{1,2} = R_{2,1}R_{2,2},
\end{equation}
\begin{equation}\label{P1_intf_nullling_smallscale}
\left(R_{1,2}-R_{1,1}\right) - \left(R_{2,2}-R_{2,1}\right) = \left(n+\tfrac{1}{2}\right)\lambda_{0}, ~n \in \mathbb{Z}.
\end{equation}
We can observe that constraints \eqref{P1_intf_nullling_largescale} and \eqref{P1_intf_nullling_smallscale} alter $R_{m,k}$ at two different scales.
The interference magnitude is eliminated in \eqref{P1_intf_nullling_largescale} by tuning $R_{m,k}$ at a large distance scale, 
whereas the destructive interference for two-mode coherent waves is ensured by the phase difference constraint \eqref{P1_intf_nullling_smallscale} by tuning $R_{m,k}$ 
at a wavelength scale. 

\subsubsection{Problem Reduction}
Given any PA positions $\mathbf{x}$ that ensures the channel orthogonality, the optimal beamforming solution is given by the MRT beamforming strategy: 
\begin{equation}\label{eq:MRT_BF}
    \mathbf{w}_{k}^{\ast}=\frac{\sqrt{p_{k}}}{\left\Vert \mathbf{\widetilde{h}}_{k}\right\Vert }\mathbf{\widetilde{h}}_{k},
\end{equation}
and the optimal power allocation $p_{k} = \max\Big\{\mu-\frac{\sigma^2}{\|\mathbf{\widetilde{h}}_{k}\|^2},0\Big\}$ is given by water-filling strategy 
with $\mu$ being the water-filling level. 
Combining \eqref{eq:MRT_BF} and \eqref{eq:P1_zero_intf}, the SINR of user $k$ becomes
\begin{equation}\label{SINR_distance}
        \mathrm{SINR}_{k}
        =\frac{p_{k}\left\Vert \mathbf{\widetilde{h}}_{k}\right\Vert^2 }
        {\frac{p_{k'}}{\left\Vert \mathbf{\widetilde{h}}_{k'}\right\Vert ^{2}}|\widetilde{\mathbf{h}}_{k}^{H}\widetilde{\mathbf{h}}_{k'}|^{2}+\sigma^{2}}
        =\frac{p_{k}}{\sigma^{2}}\frac{\lambda_{0}^{2}}{16\pi^{2}}
        \sum_{m\in\mathcal{M}}\frac{1}{R_{m,k}^{2}},
\end{equation}
which also depends on the large-scale adjustment of $R_{m,k}$. 
Hence, the minimum SINR constraints can be recast as  
\begin{equation}\label{P1_distance_max}
    \sum_{m\in\mathcal{M}}\frac{1}{R_{m,k}^{2}} \geq \frac{16\pi^2\sigma^2}{p_{k}\lambda_{0}^2} \mathrm{SINR}_{k}^{\min}, ~ \forall k \in \mathcal{K}.
\end{equation}
Hence, the sum rate maximization problem can be rearranged as the following sum reciprocal distance minimization under the dual-scale equality constraints:
\begin{subequations}\label{eq:P1_twoscale}
    \begin{align}
    \max_{\mathbf{x}}~ &
    \sum_{k\in\mathcal{K}}\log_{2}\left(1+\frac{p_{k}}{\sigma^{2}}\frac{\lambda_{0}^{2}}{16\pi^{2}}
    \sum_{m\in\mathcal{M}}\frac{1}{R_{m,k}^{2}}\right)
    \label{eq:P1_twoscale_obj}\\
    \mathrm{s.t.}\quad &
    \eqref{eq:P0_x_cons1}, \eqref{eq:P0_x_cons2}, \eqref{P1_intf_nullling_largescale}, \eqref{P1_intf_nullling_smallscale}, \eqref{P1_distance_max}. 
    \end{align}
\end{subequations}
At first glance, problem \eqref{eq:P1_twoscale} is nonconvex over $\mathbf{x}$. 
The key observation is that the two equality constraints \eqref{P1_intf_nullling_largescale}-\eqref{P1_intf_nullling_smallscale} 
severely restrict the feasible set, 
allowing the original two-dimension continuous optimization problem to be reduced to a one-dimension search over a finite set of candidate points.

\subsection{PA Position Optimization}
In this subsection, we exploit the intrinsic geometric structure of the considered case 
to efficiently obtains the global optimum without resorting to exhaustive search.

\subsubsection{Structural Analysis}
\label{subsec:structural_opt}
As shown below, the equality constraint 
\eqref{P1_intf_nullling_largescale} enables an exact dimension reduction 
from a two-dimension continuous problem to a one-dimension search problem over a finite set of candidate points. 
Denote $z_{k} \triangleq \sqrt{y_{k}^2 + h_{\mathrm{PA}}^2}$. 
Moreover, define the auxiliary function
\begin{equation}\label{P1_product_constraint} 
D(x) \!\triangleq\! R_1(x)R_2(x)
\!=\! \sqrt{(x-a_1)^2+z_1^2}\,
  \sqrt{(x-a_2)^2+z_2^2}.
\end{equation}
Constraint \eqref{P1_intf_nullling_smallscale} can be equivalently written as
\begin{equation}
D(x_1)=D(x_2).
\end{equation}

In practice, the optimal PA positions can be searched within region $x_{n} \in [a_{1}, a_{2}]$,  
thus ensuring PAs are closer to both users to minimize the path loss in the objective \eqref{eq:P1_twoscale_obj}. 
For ease of notations, define $p\triangleq\frac{-d^2+2(z_1^2+z_2^2)}{4}$, $q\triangleq\frac{d(z_2^2-z_1^2)}{4}$, 
and cube discriminant $\Delta_{D} \triangleq \Big(\frac{q}{2}\Big)^2+\Big(\frac{p}{3}\Big)^3$.
The following corollary characterizes the structure of $D(x)$ over $x \in [a_{1}, a_{2}]$.
    
\begin{corollary}[Stationary points of $D(x)$]\label{corolllary:stationary_D}
    Function $D(x)$ has $S\leqslant 3$ stationary points on region $[a_1,a_2]$. 
    If $\Delta_{D}\ge0$, $D(x)$ has a unique stationary point
    \begin{equation}
        \tau_1=a_1+\frac{a_2-a_1}{2}+\operatorname{cbrt}\Big(-\frac q2+\sqrt{\Delta_{D}}\Big)+\operatorname{cbrt}\Big(-\frac q2-\sqrt{\Delta_{D}}\Big),
    \end{equation}
    where $\operatorname{cbrt}(\cdot)$ denotes the real cube root.
    If $\Delta_{D} <0$, $D(x)$ has a number of $S=3$ stationary points, denoted by
    \begin{equation}
        \tau_s=a_1+\frac{a_2-a_1}{2}+2\sqrt{-\frac p3}\cos\!\Big(\frac{\varphi-2\pi(s-1)}{3}\Big), ~ s=1,2,3,
    \end{equation}
    where $\varphi\triangleq\arccos\Big(\frac{3q}{2p}\sqrt{-\frac3p}\Big)$.
\end{corollary}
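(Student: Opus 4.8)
The plan is to convert the first-order condition $D'(x)=0$ into a cubic polynomial equation and then solve it by Cardano's formula. Since $z_1,z_2>0$, we have $D(x)>0$ for all $x$, so $D'(x)=0$ if and only if $\bigl(D(x)^2\bigr)'=0$; and $D(x)^2=\bigl[(x-a_1)^2+z_1^2\bigr]\bigl[(x-a_2)^2+z_2^2\bigr]$ is a quartic in $x$ with leading coefficient $1$. Hence $\bigl(D^2\bigr)'$ is a genuine cubic (leading coefficient $4\neq0$), which has at most three real roots, so $S\leqslant 3$ on $[a_1,a_2]$. Moreover, writing $D^2=fg$ with $f(x)=(x-a_1)^2+z_1^2>0$ and $g(x)=(x-a_2)^2+z_2^2>0$, one has $\bigl(D^2\bigr)'=f'g+fg'$ with $f'(x)=2(x-a_1)$, $g'(x)=2(x-a_2)$; for $x\leqslant a_1$ we get $f'\leqslant0$, $g'<0$, hence $\bigl(D^2\bigr)'<0$, while for $x\geqslant a_2$ we get $f'>0$, $g'\geqslant0$, hence $\bigl(D^2\bigr)'>0$. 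Thus every real stationary point of $D$ already lies in the open interval $(a_1,a_2)$, a fact I will use to certify admissibility of the Cardano roots.

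Next I would center the variable at the midpoint. Put $d\triangleq a_2-a_1>0$ (WLOG $a_1<a_2$) and $u\triangleq x-a_1-\tfrac{a_2-a_1}{2}$, so that $x-a_1=u+\tfrac d2$ and $x-a_2=u-\tfrac d2$. Expanding the product and collecting powers of $u$ yields
\[ D^2=u^4+\Bigl(z_1^2+z_2^2-\tfrac{d^2}{2}\Bigr)u^2+d\,(z_2^2-z_1^2)\,u+\mathrm{const}, \]
and therefore $\bigl(D^2\bigr)'=4u^3+2\bigl(z_1^2+z_2^2-\tfrac{d^2}{2}\bigr)u+d(z_2^2-z_1^2)=4\bigl(u^3+pu+q\bigr)$, where $p=\tfrac{-d^2+2(z_1^2+z_2^2)}{4}$ and $q=\tfrac{d(z_2^2-z_1^2)}{4}$ are exactly the quantities defined before the corollary. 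Hence the stationary points of $D$ are the real roots of the depressed cubic $u^3+pu+q=0$, shifted back via $x=a_1+\tfrac{a_2-a_1}{2}+u$.

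It then remains to solve $u^3+pu+q=0$ by Cardano's method, whose cubic discriminant is precisely $\Delta_D=\bigl(\tfrac q2\bigr)^2+\bigl(\tfrac p3\bigr)^3$. When $\Delta_D\geqslant0$, the real-radical branch applies and gives the single real root $u=\operatorname{cbrt}\bigl(-\tfrac q2+\sqrt{\Delta_D}\bigr)+\operatorname{cbrt}\bigl(-\tfrac q2-\sqrt{\Delta_D}\bigr)$, i.e. $\tau_1$ after the shift. When $\Delta_D<0$, the inequality $\bigl(\tfrac q2\bigr)^2\geqslant0$ forces $\bigl(\tfrac p3\bigr)^3<0$, hence $p<0$, so $\sqrt{-p/3}$ is real; moreover $\Delta_D<0$ is equivalent to $\bigl|\tfrac{3q}{2p}\sqrt{-3/p}\bigr|\leqslant1$, so $\varphi=\arccos\bigl(\tfrac{3q}{2p}\sqrt{-3/p}\bigr)$ is well defined, and the trigonometric (casus irreducibilis) form of Cardano produces the three distinct real roots $u_s=2\sqrt{-p/3}\cos\bigl(\tfrac{\varphi-2\pi(s-1)}{3}\bigr)$, $s=1,2,3$, hence $\tau_1,\tau_2,\tau_3$. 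By the sign analysis of the first paragraph all of these roots lie in $(a_1,a_2)$, so indeed $S=3$ in this case.

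The only mildly delicate steps are (i) the bookkeeping of the quartic expansion so that the $u^2$ and $u$ coefficients collapse to exactly $4p$ and $4q$, and (ii) verifying that the trigonometric branch is genuinely applicable and that its three roots all lie inside $[a_1,a_2]$ — both are dispatched by the implication $\Delta_D<0\Rightarrow p<0$ together with the monotonicity of $\bigl(D^2\bigr)'$ outside $[a_1,a_2]$. The boundary case $\Delta_D=0$ (a repeated root) is non-generic and is subsumed under the $\Delta_D\geqslant0$ branch.
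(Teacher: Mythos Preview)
Your proof is correct and follows essentially the same strategy as the paper: reduce $D'(x)=0$ to a depressed cubic in the midpoint-centered variable and apply Cardano's formula, with the same $p,q,\Delta_D$. The only cosmetic differences are that the paper works with $\log D$ rather than $D^2$ and performs the centering in two steps ($u=x-a_1$ followed by $u=t+d/2$) instead of one; your added sign analysis showing all real roots lie in $(a_1,a_2)$ is a nice touch that the paper leaves implicit.
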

\begin{proof}
See Appendix \ref{proof:corolllary:stationary_D}.
\end{proof}
Since $D'(a_1)<0$ and $D'(a_2)>0$, $a_1$ and $a_2$ are not stationary points. 
The stationary points of $D(x)$ in $[a_1,a_2]$ can be ordered as 
$a_1<\tau_1<\cdots<\tau_S<a_2$.
Define $\tau_0\triangleq a_1$ and $\tau_{S+1}\triangleq a_2$. 
Then, $[a_1,a_2]$ consists of $S+1$ intervals $\mathcal{X}_{s} = [\tau_s,\tau_{s+1}]$, $s=0,1,\dots,S$, with the following property.
\begin{proposition}[Piecewise monotonicity of $D(x)$]\label{prop:D_piecewise_monotone}
    Function $D(x)$ is strictly monotone in each interval $\mathcal{X}_{s}= [\tau_s,\tau_{s+1}]$ partitioned by stationary points, $\forall s=0,1,\dots,S$. 
\end{proposition}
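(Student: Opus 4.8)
The plan is to obtain the statement as an immediate consequence of Corollary~\ref{corolllary:stationary_D} together with the smoothness of $D$. First I would record that $D$ is continuously differentiable (in fact $C^\infty$) on $[a_1,a_2]$: since $R_k(x)^2=(x-a_k)^2+z_k^2\ge z_k^2>0$, the function $\ln D(x)=\tfrac12\ln R_1(x)^2+\tfrac12\ln R_2(x)^2$ is smooth, and as $D(x)>0$ everywhere, so is $D$. Differentiating the logarithm gives
\begin{equation}
\frac{D'(x)}{D(x)}=\frac{x-a_1}{(x-a_1)^2+z_1^2}+\frac{x-a_2}{(x-a_2)^2+z_2^2},
\end{equation}
so the stationary points of $D$ on $(a_1,a_2)$ are exactly the zeros of the right-hand side; clearing denominators reduces this to the depressed cubic whose roots are computed in Corollary~\ref{corolllary:stationary_D}. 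Evaluating at the endpoints, $D'(a_1)/D(a_1)=\tfrac{a_1-a_2}{(a_1-a_2)^2+z_2^2}<0$ and $D'(a_2)/D(a_2)=\tfrac{a_2-a_1}{(a_2-a_1)^2+z_1^2}>0$ because $a_1<a_2$, so $a_1$ and $a_2$ are not stationary and the ordered list $a_1=\tau_0<\tau_1<\cdots<\tau_S<\tau_{S+1}=a_2$ (with $S\le 3$) contains every zero of $D'$ in $[a_1,a_2]$.

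Next I would fix an index $s\in\{0,1,\dots,S\}$ and work on the open interval $(\tau_s,\tau_{s+1})$. By the previous step this interval contains no stationary point, i.e.\ $D'(x)\neq 0$ for all $x\in(\tau_s,\tau_{s+1})$. Since $D'$ is continuous on this connected set and never vanishes there, the intermediate value theorem forces $D'$ to keep a constant sign on $(\tau_s,\tau_{s+1})$ — either strictly positive throughout or strictly negative throughout. Hence $D$ is strictly increasing (resp.\ strictly decreasing) on $(\tau_s,\tau_{s+1})$; as $D$ is continuous on the closed interval $\mathcal{X}_s=[\tau_s,\tau_{s+1}]$, the strict monotonicity extends to the endpoints, which is the claim. (Equivalently, for $\tau_s\le u<v\le\tau_{s+1}$ one can write $D(v)-D(u)=\int_u^v D'(t)\,dt$, whose sign is that of the constant-sign, not-identically-zero integrand.)

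I do not expect a real obstacle here: the proposition is essentially a corollary of the finiteness and explicit description of the stationary set in Corollary~\ref{corolllary:stationary_D}, plus continuity of $D'$. The only points needing a line of care are (i) checking that $a_1,a_2$ are genuinely non-stationary, so that the partition of $[a_1,a_2]$ into the $S+1$ subintervals $\mathcal{X}_s$ is well defined, and (ii) upgrading strict monotonicity from the open interval to the closed interval $\mathcal{X}_s$; both are routine. If desired, one may additionally remark that when the cubic has simple roots, $D'$ changes sign at each interior $\tau_s$, so the monotonicity type alternates across consecutive intervals — but this is not required for the statement.
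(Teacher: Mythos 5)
Your proof is correct and follows the same (essentially only) route the paper intends: the paper states Proposition~\ref{prop:D_piecewise_monotone} without an explicit proof, treating it as an immediate consequence of Corollary~\ref{corolllary:stationary_D}, and your argument—$D'$ is continuous, nonvanishing on each open subinterval between consecutive stationary points, hence of constant sign there by the intermediate value theorem, with strict monotonicity extending to the closed interval by continuity—is exactly the reasoning being implicitly invoked. The endpoint sign checks $D'(a_1)<0$, $D'(a_2)>0$ also match the paper's stated justification that $a_1,a_2$ are not stationary.
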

The piecewise monotonicity of $D(x)$ in \textbf{Proposition~\ref{prop:D_piecewise_monotone}} 
ensures that given $x_1\in\mathcal{X}_s$,  in each interval $\mathcal{X}_{s'}$ at most one $x_2$ satisfies $D(x_1) = D(x_2)$ (and $s'>s$ since $x_2>x_1$). 
Because $S\leqslant 3$, for each fixed $x_1$, at most $3$ feasible points of $x_2$ exist. 
Then, the inequality constraints \eqref{eq:P0_x_cons1} and \eqref{P1_distance_max} 
can be checked directly, and any candidate violating these inequality constraints can be discarded. 
Hence, constraint \eqref{P1_product_constraint} induces a single-valued mapping within each interval:
\begin{equation}
x_2=\mathcal{Z}(x_1), \qquad x_1\in\mathcal{X}_{s},~ x_2\in\mathcal{X}_{s'}, ~s'>s.
\end{equation}
Thus, all feasible solutions satisfying \eqref{P1_intf_nullling_largescale} lie on a one-dimension curve parameterized by $x_1$. 
For each $x_1\in\mathcal{X}$, the corresponding $x_2=\mathcal{Z}(x_1)$ can be coarsely determined to minimize large-scale path losses 
and satisfy the large-scale equality constraint \eqref{P1_intf_nullling_largescale}. 
To additionally enforce phase-difference constraint \eqref{P1_intf_nullling_smallscale}, we further invoke Newton's method to refine the  
estimated $(x_1,\mathcal{Z}(x_1))$ in the wavelength scale.  

\subsubsection{Newton-based One-Dimension Search Algorithm}
Define the reduced one-dimension objective
\begin{equation}
F(x_1)=
\sum_{k\in\mathcal{K}}
\log_2\!\left(
1+\frac{p_{k}}{\sigma^{2}}\frac{\lambda_{0}^{2}}{16\pi^{2}}\Big(\frac{1}{R_k^2(x_1)}+\frac{1}{R_k^2(\mathcal{Z}(x_1))}\Big)
\right).
\end{equation}
Since all feasible points of the original problem are contained in the finite set $\mathcal{X}$, the global optimum is obtained by selecting
\begin{equation}
x_1^\star=\mathop{\arg\max}_{x_1\in\mathcal{X}} F(x_1),\qquad
x_2^\star=\mathcal{Z}(x_1^\star).
\end{equation}

\begin{algorithm}[t]
    \caption{Global PA Position Optimization via Coarse Sampling and Local Refinement}
    \label{alg:global_opt_refined}
    \begin{algorithmic}[1]
    \Require Number of grid points $n_{\mathrm{grid}}$, tolerance $\epsilon$.
    \State Obtain $S$ stationary points $\{\tau_s\}$ 
    using Corollary \ref{corolllary:stationary_D} ($S\leqslant 3$).
    \State Initialize a uniform grid in $\mathcal{X} = \mathcal{X}_{0} \cup  \mathcal{X}_{1}\dots \cup \mathcal{X}_{S-1}$. 
    \State Set $F_{\max} \gets -\infty$.
    \State \textit{// One-dimension coarse sampling}
    \For{each $x_1$ in $\mathcal{X}_s$, $s=0,\dots,S-1$}
        \For{each $s' = s+1, \dots, S$}
        \State Obtain $x_2$ by solving $D(x_2) = D(x_1)$ over interval $x_2\in\mathcal{X}_{s'}$ using bisection method.
        \State If {$x_2 - x_1 < \Delta$}, \textbf{continue;}
        \State\label{alg_refinement_start} \textit{// Local refinement by Newton's method}
        \State $x_1^{(0)} \leftarrow x_1$, $x_2^{(0)} \leftarrow x_2$, $t\leftarrow1$. Estimate $n$ by \eqref{Newton_iteration_integer}. 
        \Repeat
        \State Update $x_{1}^{(t)}$ and $x_{2}^{(t)}$ by \eqref{eq:newton_update}. $t\leftarrow t+1$.
        \Until{reach termination criterion}
        \State If refined points $(x_1', x_2')$ violate $\|\mathbf{F}(x_1', x_2')\|_2\leqslant \epsilon$ or \eqref{P1_distance_max}, \textbf{continue};
        \If{$(x_1', x_2')$ is feasible and $F(x_1', x_2') > F_{\max}$}
            \State $(x_1^\star, x_2^\star) \gets (x_1', x_2')$, 
            $F_{\max} \gets F(x_1', x_2')$.
        \EndIf\label{alg_refinement_end}
    \EndFor
    \EndFor
    \State \textbf{Return} optimal $(x_1^\star, x_2^\star)$ and objective $F_{\max}$.
    \end{algorithmic}
\end{algorithm}

We develop the following global optimization algorithm, which mainly contains the following four steps. 
\begin{enumerate}
\item[\textbf{(i)}] Obtain $\tau_{1},\tau_{2},\dots,\tau_{S}$ ($S\leqslant 3$) by solving $D'(x)=0$.
\item[\textbf{(ii)}] \textbf{One-dimension search:} 
Uniformly sample $x_{1}$ over $[a_1, \tau_{S}]$. 
For each coarse sampling of $x_{1}$ in $\mathcal{X}_{s}$, obtain $x_{2}\in\mathcal{X}_{s'}$, $\forall s' >s$, 
that satisfies large-scale channel-orthogonality constraint \eqref{P1_intf_nullling_largescale} via bisection method.
\item[\textbf{(iii)}] \textbf{Local refinement:} Refine each coarse candidate $(x_1,x_2)$ within small wavelength-scale
$\delta$-neighborhoods $[x_{n}-\delta, x_{n}+\delta]$ 
to enforce both \eqref{P1_intf_nullling_largescale} and
\eqref{P1_intf_nullling_smallscale}. 
These two coupled nonlinear equations can be solved using a two-variable Newton method. 
Specifically, define functions $\mathbf{F}(x_1,x_2)=[f_1(x_1,x_2),\,f_2(x_1,x_2)]^\top$ with
\begin{align}
    f_{1}(x_{1}, x_{2}) &\triangleq D(x_{1}) - D(x_{2}) = 0, \label{eq:local_f1}\\
    f_{2}(x_{1}, x_{2}) &\triangleq \varphi(x_{1}, x_{2}) - \left(n + \tfrac{1}{2} \right)\lambda_{0} = 0, \label{eq:local_f2}
\end{align}
where $\varphi(x_{1}, x_{2}) = [R_{1}(x_{2}) - R_{1}(x_{1})] - [R_{2}(x_{2}) - R_{2}(x_{1})]$ denotes 
the differential phase differences between the two users. 
To preserve global optimality, in \eqref{eq:local_f2} we choose the integer $n$ that minimizes the required adjustment from the coarse point:
\begin{equation}\label{Newton_iteration_integer}
n = \mathrm{round}\!\left(\frac{\varphi(x_1^0,x_2^0)}{\lambda_{0}}-\frac{1}{2}\right). 
\end{equation}
The target phase $(n+\frac{1}{2}\lambda)$ is the closest feasible value to the coarse estimate $\varphi(x_1^0,x_2^0)$. 
Hence, Newton's method only induces wavelength-scale refinement around $(x_1^0,x_2^0)$ to eliminate residual misaligned phase, 
which thus preserve the large-scale path-loss optimality.

Based on Newton's method, we update $(x_1,x_2)$ by solving $\mathbf{F}(x_1,x_2)=\mathbf{0}$ via
\begin{equation}
    \begin{bmatrix}
        x_{1}^{(t+1)} \\
        x_{2}^{(t+1)}
    \end{bmatrix}
    = 
    \begin{bmatrix}
        x_{1}^{(t)} \\
        x_{2}^{(t)}
    \end{bmatrix}
    - 
    \mathbf{J}^{-1}(x_{1}^{(t)}, x_{2}^{(t)})
    \cdot
    \mathbf{F}(x_{1}^{(t)}, x_{2}^{(t)}), \label{eq:newton_update}
\end{equation}
where $\mathbf{J}(x_{1},x_{2})$ is the Jacobian matrix given by
$\mathbf{J} = 
    \begin{bmatrix}
        \tfrac{\partial f_1}{\partial x_1} & \tfrac{\partial f_1}{\partial x_2} \\
        \tfrac{\partial f_2}{\partial x_1} & \tfrac{\partial f_2}{\partial x_2}
    \end{bmatrix}$,
with partial derivatives computed by
\begin{align}
    \frac{\partial f_1}{\partial x_1} &= R_2(x_1) \frac{x_1 - a_1}{R_1(x_1)} + R_1(x_1) \frac{x_1 - a_2}{R_2(x_1)}, \label{eq:df1dx1} \\
    \frac{\partial f_1}{\partial x_2} & 
    = - R_2(x_2) \frac{x_2 - a_1}{R_1(x_2)} - R_1(x_2) \frac{x_2 - a}{R_2(x_2)}, \\
    \frac{\partial f_2}{\partial x_1} &= -\frac{x_1 - a_1}{R_1(x_1)} + \frac{x_1 - a_2}{R_2(x_1)}, \\
    \frac{\partial f_2}{\partial x_2} &= \frac{x_2 - a_1}{R_1(x_2)} - \frac{x_2 - a_2}{R_2(x_2)}.
\end{align}
The iteration in \eqref{eq:newton_update} is initialized from the coarse estimate 
$(x_1^{(0)}, x_2^{(0)})$ generated in step (ii). 
Since the coarse solution is already close to feasibility and
$\mathbf{J}(x_1,x_2)$ is nonsingular for almost all $(x_1,x_2)$ in the
local neighborhood, the method typically converges within a few steps. 
A refined point $(x_1', x_2')$ is accepted if the residual norm $\|\mathbf{F}(x_1', x_2')\|_2$ 
is below the tolerance $\epsilon$ and inequality constraints in \eqref{eq:P1_twoscale} are also satisfied.
\item[\textbf{(iv)}] From all feasible candidate points $(x_1',\mathcal{Z}(x_1'))$, 
evaluate $F(x_1')$ to find the global optimum $(x_1^\star,x_2^\star)$.
\end{enumerate}
The entire procedure is summarized in \textbf{Algorithm \ref{alg:global_opt_refined}}.

\section{Low-Complexity Solution for the General Multiple-PA Case}
\label{sec:de_pso_zf_joint_opt}
In this section, we develop low-complexity solution for the geneal multi-PA case, which may operate in both non-leakage and weak-leakage regimes. 
To tackle \eqref{eq:P0} efficiently we develop a low-complexity PSO-ZF algorithm with a bi-level structure. 
The outer loop updates $\mathbf{x}$ via PSO \cite{Kennedy1995PSO} under the physical constraints, 
while the inner loop computes a closed-form ZF precoder for each candidate configuration. 
The minimum-SINR constraints are enforced in the fitness evaluation by discarding infeasible solutions, 
enabling efficient and robust search despite wavelength-scale oscillations.

\subsection{Baseband Beamforming via Zero Forcing}

For a given PA position vector $\mathbf{x}$, the equivalent baseband channel from the digital
precoder to the $K$ users is given by
\begin{equation}\label{eq:eff_H}
\mathbf{H}_{\mathrm{eff}}(\mathbf{x})
\triangleq
\mathbf{H}^{\mathsf H}(\mathbf{x})\,\mathbf{G}(\mathbf{x})
\in \mathbb{C}^{K \times K},
\end{equation}
where $\mathbf{H}(\mathbf{x})$ denotes the free-space channel between the PAs and the users,
and $\mathbf{G}(\mathbf{x})$ captures the in-waveguide propagation and mode-selective radiation. 
Given a fixed $\mathbf{x}$, the baseband beamformer is designed using zero-forcing precoding on
$\mathbf{H}_{\mathrm{eff}}(\mathbf{x})$.
The unnormalized ZF beamformer is given by
\begin{equation}
\mathbf{W}_{\mathrm{ZF}}(\mathbf{x})
=
\mathbf{H}_{\mathrm{eff}}^{\mathsf H}(\mathbf{x})
\big(
\mathbf{H}_{\mathrm{eff}}(\mathbf{x})\mathbf{H}_{\mathrm{eff}}^{\mathsf H}(\mathbf{x})
\big)^{-1},
\label{eq:zf_precoder}
\end{equation}
where a small diagonal loading may be added for numerical stability.
The beamformer is then scaled to satisfy the total transmit power constraint as
\begin{equation}
\mathbf{W}(\mathbf{x})
=
\mathbf{W}_{\mathrm{ZF}}(\mathbf{x})
\sqrt{\frac{P_{\max}}{\|\mathbf{W}_{\mathrm{ZF}}(\mathbf{x})\|_F^2}}.
\label{eq:zf_power_norm}
\end{equation}
The ZF precoder provides a closed-form and computationally efficient solution that suppresses
multiuser interference in the equivalent baseband domain.

\subsection{Outer-Loop PA Position Optimization}

We next describe the outer-loop PA position optimization based on PSO.
Let $\mathbf{x}=[x_1,\ldots,x_N]^{\mathsf T}$ denote the positions of the $N$ PAs along the waveguide.
The feasible set of PA positions is given by
\begin{equation}
\mathcal{X}
\triangleq
\Big\{
\mathbf{x}:\ 
x_{\min}\le x_1 \le \cdots \le x_N \le x_{\max},\ 
x_{n}-x_{n-1}\ge d_{\min}
\Big\},
\end{equation}
according to constraints \eqref{eq:P0_x_cons1}-\eqref{eq:P0_x_cons2}.

To avoid excessive relocations of PAs and to focus on meaningful refinements around a feasible
initial configuration $\mathbf{x}^{(0)}$, the search is further restricted to a local neighborhood
\begin{equation}
\mathcal{X}_{\delta}(\mathbf{x}^{(0)})
\triangleq
\Big\{
\mathbf{x}\in\mathcal{X}:\ 
|x_{n}-x_{n}^{(0)}|\le \delta,\ \forall n
\Big\},
\end{equation}
where $\delta$ denotes the search radius.
This trust-region constraint reflects practical mechanical limitations on PA relocations and ensures
smooth PA position tuning.

\subsubsection{PSO fitness function}
The PSO algorithm maintains a population of $n_{\rm p}$ particles.
At the $t$-th outer iteration, particle $p$ is characterized by a PA position vector
$\mathbf{x}_p^{(t)}\in\mathbb{R}^I$ and a velocity vector $\mathbf{v}_p^{(t)}\in\mathbb{R}^I$.
Each particle further tracks its personal best position $\mathbf{pbest}_p^{(t)}$,
while the global best position among all particles is denoted by $\mathbf{gbest}^{(t)}$. 
For each candidate PA configuration $\mathbf{x}_p^{(t)}$, the fitness value is evaluated using the
ZF-based beamformer computed by \eqref{eq:zf_precoder}-\eqref{eq:zf_power_norm}.
The fitness function is defined as
\begin{equation}
f(\mathbf{x})
\triangleq
\begin{cases}
\displaystyle
\sum_{k\in\mathcal{K}}\log_2\!\bigl(1+\mathrm{SINR}_k\bigr),
& \text{if } \mathrm{SINR}_k \ge \mathrm{SINR}_k^{\min},~\forall k, \\[0.6em]
-\infty, & \text{otherwise}.
\end{cases}
\label{eq:fitness_zf}
\end{equation}
This directly embedds the SINR constraints into the search process, thus  
guiding the population toward feasible PA configurations with ZF-based beamforming.
If any user violates the minimum SINR requirement, the corresponding fitness value is set to $-\infty$,
such that infeasible PA configurations are automatically excluded from the search process.

\begin{algorithm}[!t]
    \caption{PSO-ZF for Joint Multi-PA Position and Digital Beamforming Optimization}
    \label{alg:pso_zf_multiPA}
    \begin{algorithmic}[1]
    \Require PSO parameters $\delta$, $n_{\rm p},\psi,u_1,u_2,v_{\max}$. 
    \State Initialize feasible PA positions $\mathbf{x}_p^{(0)}$ for each particle $p$. 
    \State Set $\mathbf{gbest} \gets \mathbf{x}_1^{(0)}$ and $f_{\rm gbest}\gets -\infty$.
    \For{$t=0,1,\ldots,T-1$}
        \For{$p=1,\ldots,n_{\rm p}$}
            \State Compute ZF precoder $\mathbf{W}(\mathbf{x}_p^{(t)})$ using \eqref{eq:zf_precoder}--\eqref{eq:zf_power_norm}.
            \State Evaluate fitness $f(\mathbf{x}_p^{(t)})$ by \eqref{eq:fitness_zf}.
            \If{$f(\mathbf{x}_p^{(t)}) > f_{\rm pbest,p}$}
                \State Update $\mathbf{pbest}_p \gets \mathbf{x}_p^{(t)}$, $f_{\rm pbest,p}\gets f(\mathbf{x}_p^{(t)})$.
            \EndIf
            \If{$f(\mathbf{x}_p^{(t)}) > f_{\rm gbest}$}
                \State Update $\mathbf{gbest} \gets \mathbf{x}_p^{(t)}$, $f_{\rm gbest}\gets f(\mathbf{x}_p^{(t)})$.
            \EndIf
        \EndFor
        \For{$p=1,\ldots,n_{\rm p}$}
            \State Sample $\mathbf{r}_{1,p}^{(t)},\mathbf{r}_{2,p}^{(t)} \sim \mathcal{U}([0,1]^I)$.
            \State Update $\mathbf{v}_p^{(t+1)}$ by \eqref{eq:pso_velocity_update} and clip $\mathbf{v}_p^{(t+1)}$ by $v_{\max}$. 
            \State Update PA position $\mathbf{x}_p^{(t+1)}$ by \eqref{eq:PA_position_pso}.
            \State Project $\mathbf{x}_p^{(t+1)} \gets \Pi_{\mathcal{X}}\big(\mathbf{x}_p^{(t+1)}\big)$ onto $\mathcal{X}$.
            \State Clamp by $\mathbf{x}_p^{(t+1)} \gets \Pi_{\mathcal{X}}\big(\mathcal{C}_{\delta}\big(\mathbf{x}_p^{(t+1)},\mathbf{x}^{(0)}\big)\big)$.
        \EndFor
    \EndFor
    \State \textbf{Return} $\mathbf{x}^\star \gets \mathbf{gbest}$ and $\mathbf{W}(\mathbf{x}^\star)$.
    \end{algorithmic}
\end{algorithm}

\subsubsection{Velocity and position updates}
The velocity of particle $p$ is updated according to
\begin{equation}
    \begin{split}
        \mathbf{v}_p^{(t+1)}
        &=
        \psi\,\mathbf{v}_p^{(t)}
        +
        u_1 \mathbf{r}_{1,p}^{(t)} \odot
        \big(
        \mathbf{pbest}_p^{(t)}-\mathbf{x}_p^{(t)}
        \big)
        \\& +
        u_2 \mathbf{r}_{2,p}^{(t)} \odot
        \big(
        \mathbf{gbest}^{(t)}-\mathbf{x}_p^{(t)}
        \big),
    \end{split}
\label{eq:pso_velocity_update}
\end{equation}
where $\psi$ denotes the inertia weight, $u_1$ and $u_2$ are the cognitive and social coefficients,
respectively, $\mathbf{r}_{1,p}^{(t)}$ and $\mathbf{r}_{2,p}^{(t)}$ are independent random vectors
with i.i.d. entries uniformly distributed in $[0,1]$, and $\odot$ denotes element-wise multiplication.
The inertia weight $\psi$ controls how much of the previous velocity is retained, which balances exploration and exploitation. 
A larger $\psi$ encourages broader search (larger momentum), while a smaller $\psi$ damps the motion and favors local refinement. 
Moreover, the cognitive and social terms guide each particle toward its own best configuration
and the globally best configuration discovered by the swarm.

To prevent excessively large position updates, each entry of $\mathbf{v}_p^{(t+1)}$ is clipped to
a predefined maximum magnitude $v_{\max}$.
The PA positions are then updated as
\begin{equation}\label{eq:PA_position_pso}
\mathbf{x}_p^{(t+1)}=\mathbf{x}_p^{(t)}+\mathbf{v}_p^{(t+1)}.
\end{equation}
After each unconstrained update, a projection operator $\Pi_{\mathcal{X}}(\cdot)$ is applied
to enforce physical constraints on the ordering, boundary, and minimum spacing of PAs. 
The updated position vector is first projected onto a feasible set to restore
ordering, boundary, and minimum spacing constraints.
Then, the trust-region constraint is enforced by clamping each PA position to
$[x_{n}^{(0)}-\delta,\ x_{n}^{(0)}+\delta]$, followed by another projection onto $\mathcal{X}$ 
to ensure that all particles remain physically feasible
and within the prescribed local neighborhood.

The above steps are repeated for a fixed number of outer iterations, as summarized by Algorithm \ref{alg:pso_zf_multiPA}.
The best PA configuration and the corresponding baseband beamformer found by the swarm is selected as the final solution.
The developed algorithm enables fast fitness evaluation for PSO using ZF beamforming, while implicitly
accounting for constructive interference of the desired signals, inter-user interference suppression,
and the impact of guided-mode power leakage.


\section{Numerical Results}

This section presents numerical results for the two-user two-mode PASS.
Unless otherwise stated, we adopt the following simulation setting.
The main waveguide is an open dielectric rectangular waveguide with a rectangular-strip cross-section of height $7$~mm and width $5$~mm.
The core relative permittivity is $\varepsilon_r=4.0$ with refractive index $n_{\mathrm{core}}=2.0$, and the cladding is air with $n_{\mathrm{clad}}=1$.
At the operating frequency $f_c=28$~GHz, we consider the two lowest-order guided modes of the open dielectric rectangular waveguide, namely the fundamental quasi-TE$_0$ mode and the first higher-order quasi-TE$_1$ mode.
Since guided modes in open dielectric waveguides are not strictly TE/TM, the modes are referred to as quasi-TE according to their dominant transverse electric-field polarization. 
The corresponding propagation constants $\beta_m$ are computed using an eigenmode solver, leading to $n_{\mathrm{eff},1}=1.7036$ and $n_{\mathrm{eff},2}=1.0892$.
Each PA is implemented as a short, parallel single-mode dielectric-waveguide segment with coupling length $L=6$ mm, placed in the near field of the main waveguide to enable non-contact evanescent coupling.
The propagation constant of each PA group is tailored to achieve phase matching at the center frequency.
In practice, $\beta_{n}^{\mathrm{PA}}$ is tuned by adjusting the auxiliary-waveguide dispersion via its cross-sectional geometry and dielectric loading, thereby enabling selective and independent coupling to the quasi-TE$_{0}$ and quasi-TE$_{1}$ modes, respectively.


\begin{figure}[!t]
    \centering
    \includegraphics[width=1\linewidth]{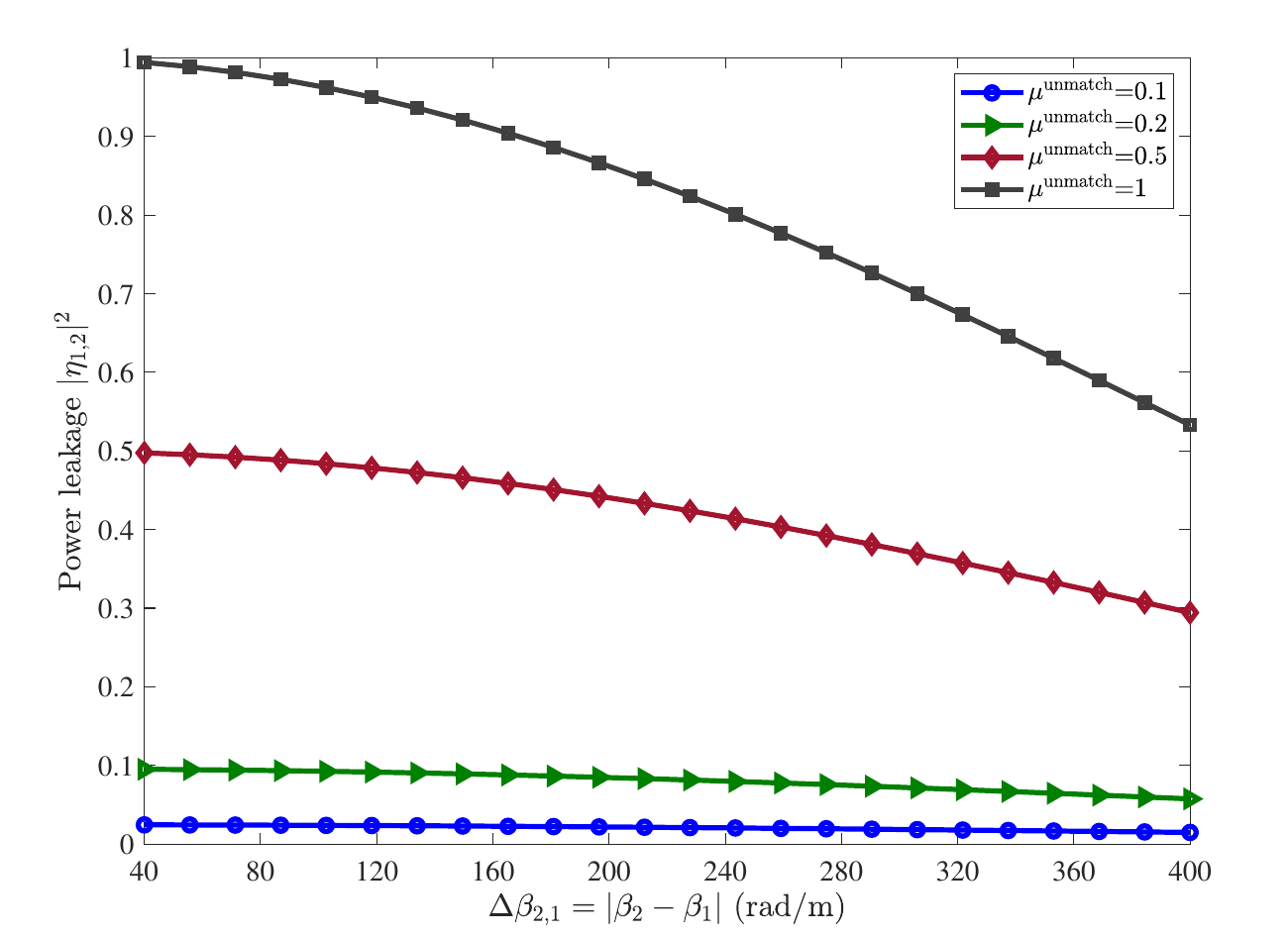}
    \caption{Power leakage versus $\Delta \beta_{2,1}$.}\label{fig:eta_beta}
\end{figure}

The waveguide length is set to $L_{\mathrm{wg}}=20$~m, and the waveguide is placed at a fixed height $h_{\mathrm{PA}}=2.5$~m above the user plane.
Along the waveguide, each PA $n$ can slide within $x_n\in[0,L_{\mathrm{wg}}]$ with a minimum inter-PA spacing $\Delta_{\min}=\lambda_0/2$.
Each user $k$ is located at $(a_k,y_k,0)$, where $a_k$ is uniformly distributed in $[3,L_{\mathrm{wg}}]$~m and $y_k$ is uniformly distributed in $[3,10]$~m.
The maximum transmit power and noise power are set to $P_{\max}=27$~dBm and $\sigma^2=-94$~dBm, respectively, and the minimum rate requirement is $1$~bps/Hz.
All numerical results are averaged over $100$ independent Monte Carlo realizations. 
In the two-PA case, Algorithm~\ref{alg:global_opt_refined} is implemented with a coarse grid size $N_x=1000$, Newton tolerance $10^{-9}$, a maximum of $40$ Newton iterations, and a bisection tolerance of $10^{-9}$.
In the multi-PA case, the PSO-ZF method in Algorithm~\ref{alg:pso_zf_multiPA} employs $n_{\mathrm{p}}=100$ particles over $T=50$ iterations, 
with maximum velocity $v_{\max}=5$, cognitive coefficient $u_1=1.4$, and social coefficient $u_2=1.2$.
The following baselines are considered for comparisons.
\begin{itemize}
    \item \textbf{Single-mode PASS with TDMA}: Only the fundamental mode is excited by a single feed in the waveguide. 
    Hence, two users are served in a TDMA manner, and each user occupies  $1/2$ time fraction. 
    In each time fraction, the PAs are scheduled to maximize the SNR \cite{Xu2025Rate}, with average power radiation among PAs. 
    \item \textbf{MISO with fully-digital beamforming (WMMSE)}: 
    An fixed-location array of $I=2$ antenna elements and two RF chains is deployed at the same position of the BS at $(0,0,h_{\mathrm{PA}})$. 
    The baseband beamforming is optimized by WMMSE to maximize the downlink sum rate, 
    serving as a benchmark for spatial multiplexing in multi-user transmission 
    but without the large-scale path-loss control. 
    \item \textbf{MISO with hybrid beamforming}: 
    The fixed-location MISO array utilizes a hybrid beamforming structure 
    with two RF chains and $I>2$ antenna elements through a phase-shifter network.
    A dual-loop hybrid beamforming algorithm is exploited.
    The outer loop computes a target fully-digital WMMSE precoder, 
    whereas the inner loop approximates it by alternatively updating the digital beamformer via least squares 
    and the constant-modulus analog beamformer via phase-only projection.
\end{itemize}

Fig.~\ref{fig:eta_beta} demonstrates that the power leakage decreases monotonically as $|\Delta\beta_{2,1}|$ increases. 
Moreover, a negligible power leakage can be guaranteed when the field selectivity $\mu^{\mathrm{unmatch}}=0$ for unmatched mode and PA is small, 
which can be realized through engineering designs of PAs. 
We use the following setting in the sequel. 
For non-leakage regime, the power leakage from the guided mode to unmatched PAs are ignored with $\mu^{\mathrm{unmatch}}=0$.  
For weak-leakage regime, a weak power leakage arises from the guided mode to unmatched PAs with $\mu^{\mathrm{unmatch}}=0.5$, 
as indicated in Fig.~\ref{fig:eta_beta}. 

\begin{figure}[!t]
    \centering
    \includegraphics[width=1\linewidth]{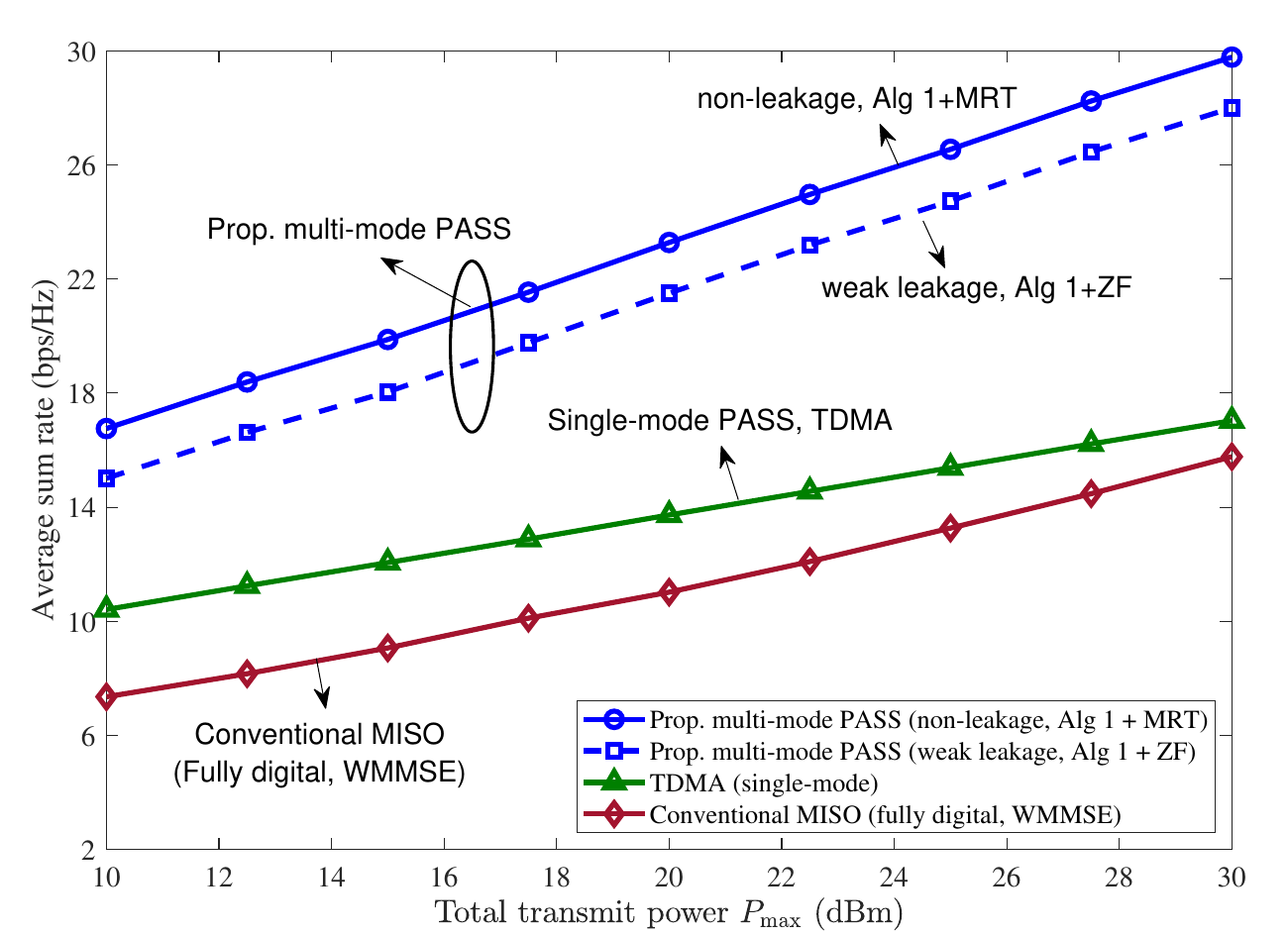}
    \caption{Sum rate under different $P_{\max}$.}\label{fig:rate_Pmax}
\end{figure}

\begin{figure}[!t]
    \centering
    \includegraphics[width=1\linewidth]{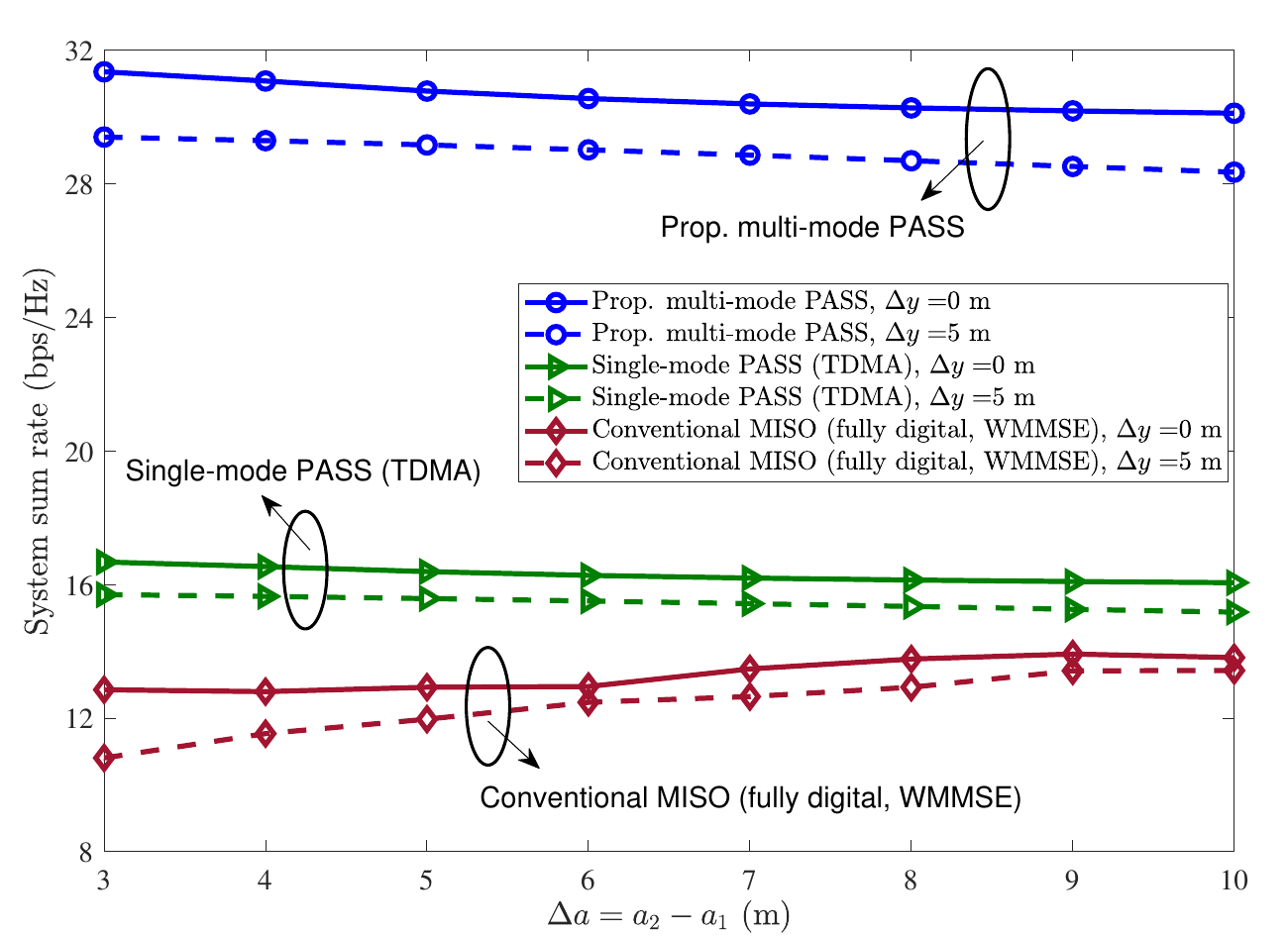}
    \caption{Sum rate comparisons under different user distributions.}\label{fig:rate_user_distribution}
\end{figure}

\subsection{System Performance for Two-PA Case}
We evaluate the system performance for two-PA case with $I=2$, where each PA is matched to one guided mode (i.e., $|{\mathcal I}_1|=|{\mathcal I}_2|=1$). 
For the non-leakage regime, the optimal PA positions is achieved by Algorithm~\ref{alg:global_opt_refined} with MRT beamforming. 
For the weak-leakage regime, the ZF beamforming is alternatively exploited with Algorithm~\ref{alg:global_opt_refined} to cancel the residual interference induced by the power leakage.

Fig.~\ref{fig:rate_Pmax} compares the system sum rate under different $P_{\max}$.
As shown in Fig.~\ref{fig:rate_Pmax}, the proposed multi-mode PASS achieves the highest sum rate by simultaneously serving users through mode-domain multiplexing. 
This demonstrates that the optimized PA positions can satisfy the two-scale channel orthogonality equalities for interference nulling, 
while reducing large-scale path losses. 
Notably, the proposed multi-mode PASS maintains the superiority even in the weak-leakage setting. 
This implies that effective mode-domain multiplexing can be achieved as the multi-mode PASS enables a full-rank equivalent PASS channel. 
In contrast, the single-mode PASS with TDMA suffers from a performance degradation because each user occupies only half of the resource block, leading to inefficient spectral utilization. 
While WMMSE-based fully digital beamforming is adopted, the fixed-antenna MISO leads to the worst performance owing to the large path losses experienced by the users. 

Fig.~\ref{fig:rate_user_distribution} further compares system sum rates under different user distributions, where we configure $a_2 > a_1$ and $y_2 > y_1$. 
As users move farther away along the $y$-axis direction, 
all schemes experience slight rate degradation due to increased path loss. 
When user distribution becomes more dispersed, i.e., $\Delta a = a_2 - a_1$ increases, the proposed multi-mode PASS experiences a slight performance loss, 
since the interference nulling may compromise the path gains. 
In comparison, the performance of the single-mode PASS remains unchanged, since PAs are always placed in the nearst locations from users.  
Furthermore, the sum rate acheived by the fixed-antenna MISO system increases as users become more dispersed, because their 
spatial channel correlations are decreased, and the path loss of the near-end user is reduced. 
Nevertheless, the proposed multi-mode PASS maintains a consistent performace gain over the TDMA baseline and the conventional MISO benchmark, 
indicating its effectiveness across various user distributions. 

\begin{figure}[!t]
    \centering
    \includegraphics[width=1\linewidth]{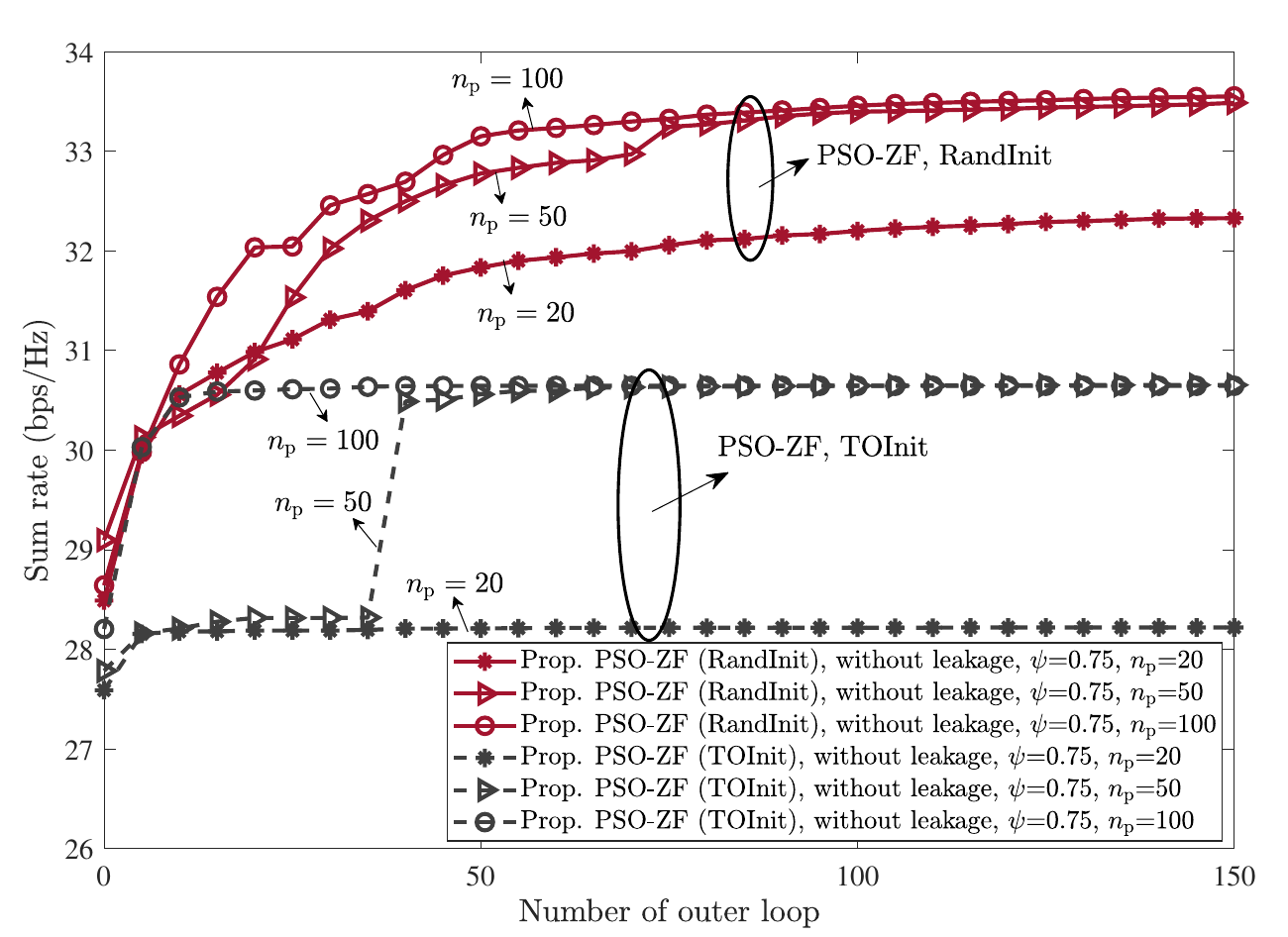}
    \caption{Convergence behaviours in non-leakage regime.}\label{fig:pso_zf_conv1}
\end{figure}

\begin{figure}[!t]
    \centering
    \includegraphics[width=1\linewidth]{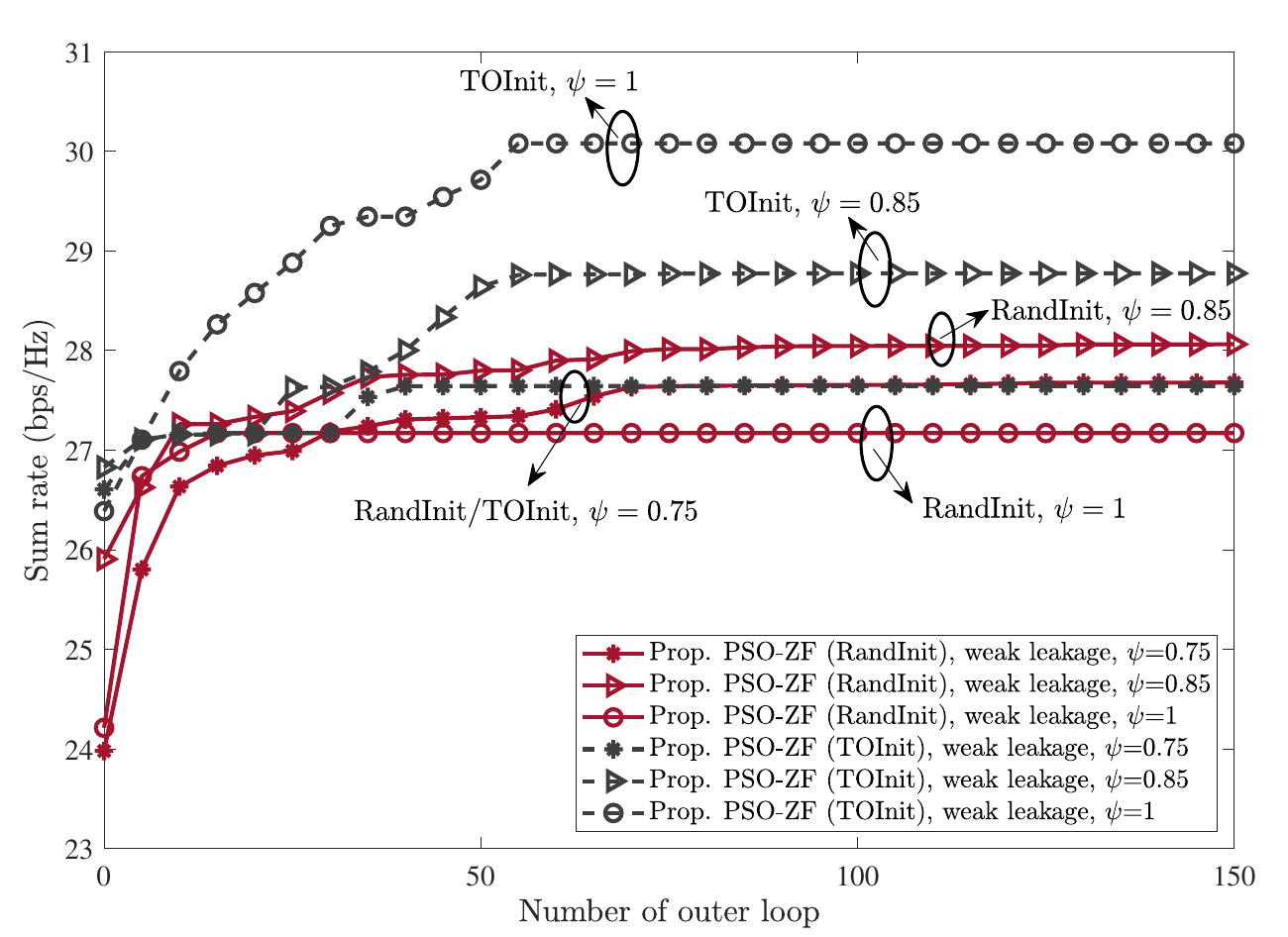}
    \caption{Convergence behaviours in weak-leakage regime.}\label{fig:pso_zf_conv2}
\end{figure}

\subsection{System Performance for Multi-PA Case}
In this part, we evaluate performance for the general multi-PA case using the proposed PSO-ZF in \textbf{Algorithm \ref{alg:pso_zf_multiPA}}. 
PSO particles are initialized around $\mathbf{x}^{(0)}$ with a small perturbation.
The following initialization schemes for $\mathbf{x}^{(0)}$ are considered: 
\begin{itemize}
    \item \textbf{Random Initialization (RandInit):}
    The PA position vector $\mathbf{x}^{(0)}$ is randomly initialized within $[x_{\min},x_{\max}]$, and is then projected to satisfy the spacing constraint. 
    \item \textbf{Two-PA-Optimum based Initialization (TOInit):}
    $\mathbf{x}^{(0)}$ is initialized by the optimum $(x_1^\star,x_2^\star)$ in two-PA case obtained by \textbf{Algorithm~\ref{alg:global_opt_refined}}. 
    The PAs in ${\mathcal I}_1$ and ${\mathcal I}_2$ are placed as tightly packed clusters centered at $x_1^\star$ and $x_2^\star$ with spacing $\Delta_{\min}$, respectively.
    This initializer provides a topology-aware warm start.
\end{itemize}

Fig.~\ref{fig:pso_zf_conv1} illustrates the convergence behaviours in the non-leakage regime (i.e., $\eta_{n,m}=0$ for $i\notin{\mathcal I}_m$). 
As shown in Fig.~\ref{fig:pso_zf_conv1}, increasing the particle number $n_{\rm p}$ improves both the convergence speed and the final sum rate under RandInit scheme. 
In particular, $n_{\rm p}=100$ achieves the best performance and saturates around the highest plateau, while $n_{\rm p}=20$ converges to a noticeably lower value. 
This is because a larger swarm provides stronger exploration capability in the highly nonconvex search space $\mathcal{X}$.
Moreover, TOInit exhibits a much earlier stabilization but converges to a lower system sum rate than RandInit in this non-leakage setting. 
This indicates that, when unmatched leakage is absent, 
the globally favorable multi-PA configuration is not necessarily a pair of dense clusters around $(x_1^\star,x_2^\star)$. 
Instead, distributing PAs more flexibly along the waveguide can improve the conditioning of $\mathbf{H}_{\rm eff}(\mathbf{x})$ and thus enhance the ZF sum rate. 

\begin{figure}[!t]
    \centering
    \includegraphics[width=1\linewidth]{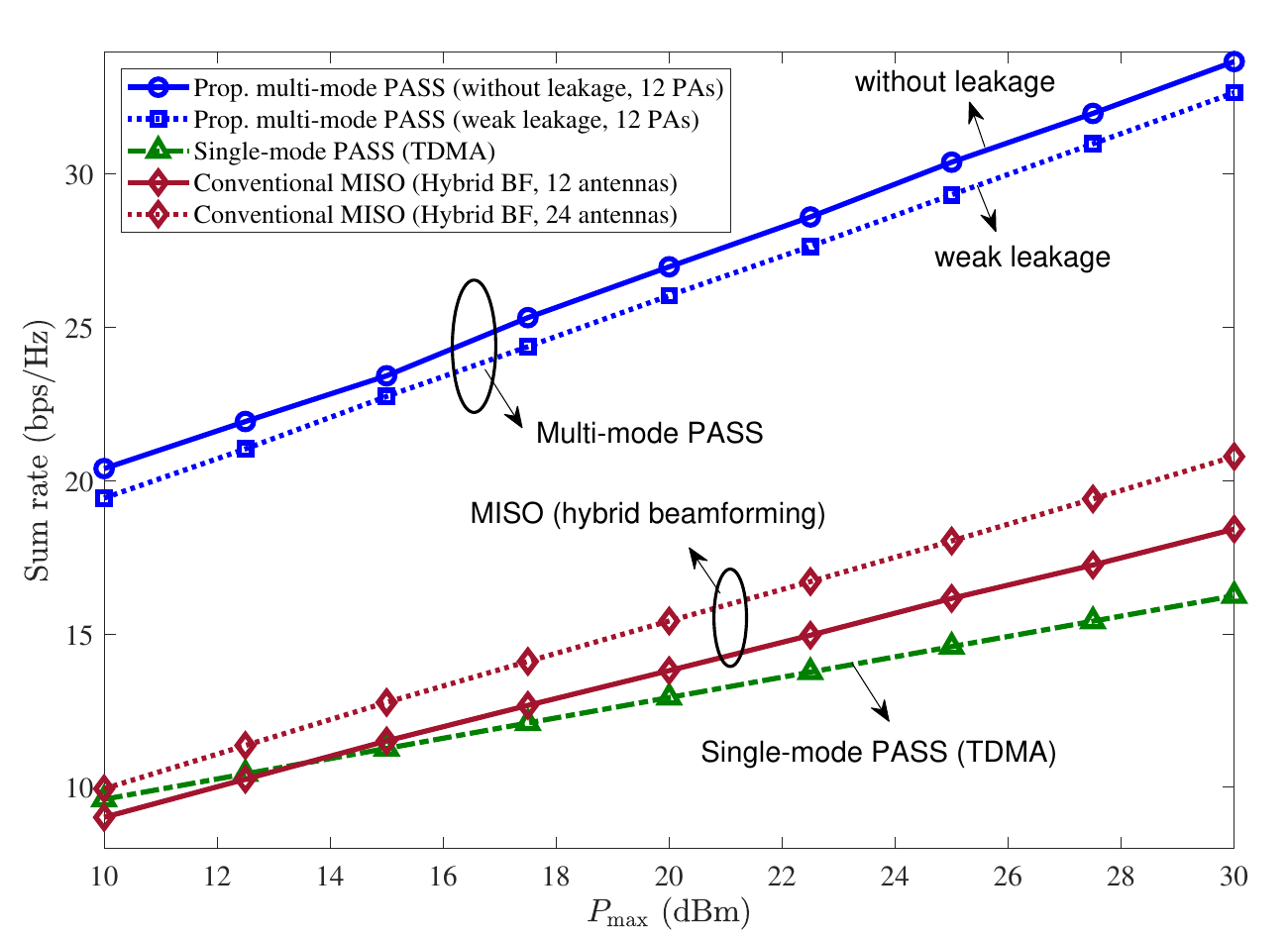}
    \caption{Sum rate versus $P_{\max}$ for multi-PA case.}\label{fig:rate_Pmax_multiPA}
\end{figure}

\begin{figure}[!t]
    \centering
    \includegraphics[width=1\linewidth]{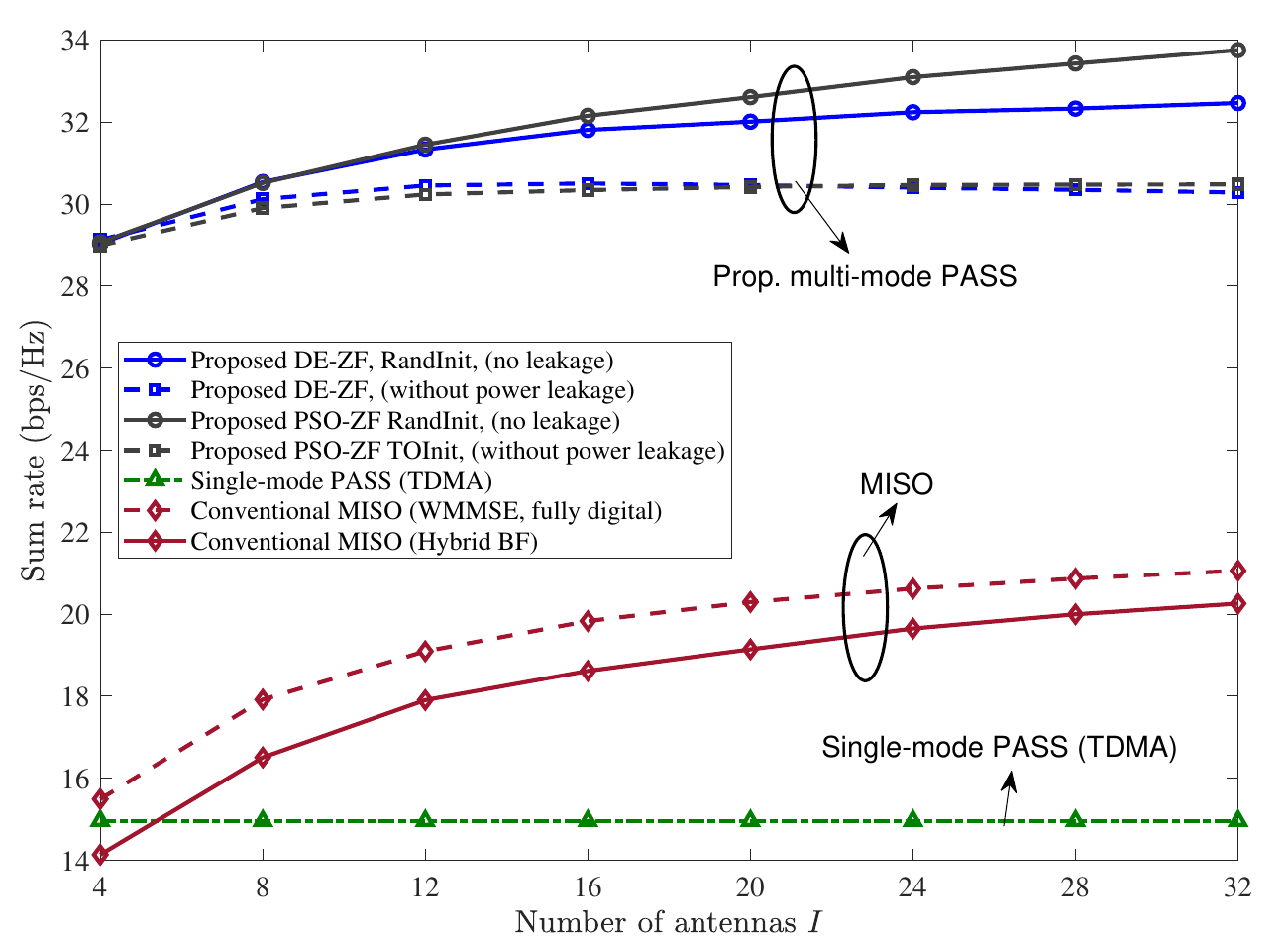}
    \caption{Sum rate versus number of antennas $I$.}\label{fig:rate_I_multiPA}
\end{figure}

Fig.~\ref{fig:pso_zf_conv2} further demonstrates the convergence behaviours in the weak-leakage regime, where different inertia weights $\psi$ are adopted for the PSO velocity update. 
In this regime, TOInit scheme with $\psi=1$ achieves the highest sum rate and converges reliably, whereas RandInit scheme yields a significantly lower performance. 
This may because under the power leakage, randomly distributed PAs may create strong residual interference during the search. 
The two-PA-based clustered placement provides a better-conditioned starting point that implicitly reduces harmful unmatched interactions and guides the swarm toward feasible high-rate regions.
Moreover, for RandInit scheme, a moderate $\psi$ (e.g., $\psi=0.85$) yields the best performance, 
while too large $\psi$ (e.g., $\psi=1$) may cause excessive velocity and oscillation around feasible regions, 
thus slowing down refinement. 
By contrast, TOInit scheme is more robust to larger $\psi$ because its initial structure 
already lies closer to a good feasible basin, and a larger inertia can help escape shallow local traps introduced by leakage.

Fig. \ref{fig:rate_Pmax_multiPA} displays the sum rate versus $P_{\max}$ with $I=12$ PAs. 
The proposed multi-mode PASS significantly outperforms TDMA-based single-mode PASS, confirming that it achieves substantial multiplexing gains. 
Compared with conventional MISO using hybrid beamforming and even increased antennas, 
the proposed PASS achieves a higher sum rate, highlighting the benefit of path loss control and mode-domain multiplexing. 
Moreover, the weak-leakage regime remains close performance to the non-leakage regime, demonstrating that the proposed algorithm is resilient to different settings.

Fig.~\ref{fig:rate_I_multiPA} shows the sum rate versus the number of PAs/antennas $I$. 
The baseline schemes involves another population-based optimization algorithm, i.e., differential evolution \cite{storn1997differential} with ZF beamforming (DE-ZF). 
The performance of both multi-mode PASS and MISO increases with $I$ since more antennas facilitate interference suppression, 
while the gain diminishes due to saturation. 
In contrast, performance of single-mode PASS is insensitive to $I$, as it relies on time-domain multiplexing, rather than spatial multiplexing. 
Moreover, PSO-ZF outperforms DE-ZF especially as $I$ increases, suggesting that PSO is more effective 
for the resulting high-dimension and high-oscillatory problem. 
The weak-leakage regime exhibits a slight performance loss due to undesired coupling, 
but the performance stays close to the non-leakage regime using the TOInit scheme.

\section{Conclusion}
This paper proposed a novel multi-mode PASS framework, 
where a single waveguide can simultaneously transmit multiple data streams via multiple modes, thus overcoming the DoF limitation of conventional PASS. 
A physic model was derived, which revealed the mode selectivity of PA power radiation. 
A two-mode PASS enabled two-user downlink communication system was examined. 
PAs were grouped to match with different modes for mode-selective signal radiation. 
PA locations and baseband beamforming were jointly optimized to maximize the sum rate under each user's minimum-rate constraints. 
For a non-leakage two-PA case, channel orthogonality was achieved by large-scale and wavelength-scale equalities on PA locations, 
which reduced the optimal beamforming to MRT. 
A Newton-based algorithm was then developed to efficiently search for the optimum. 
For a general multi-PA case, a low-complexity PSO-ZF method was proposed to tackle the high-oscillatory strong-coupled problem. 
Numerical results demonstrated the superiority of multi-mode PASS over conventional PASS and fixed-antenna MISO.

\appendices

\section{Proof of \textbf{Proposition \ref{prop:MS_two_mode}}}\label{proof:prop:MS_two_mode}

The proof follows coupled-mode analyses in the optical-waveguide and microwave literature (see, e.g.,
\cite[Sec.~13-3]{SnyderLove1983} and \cite[Sec.~9-7]{CollinFieldTheory}).
We adopt the following classical assumptions commonly used in weakly perturbed multi-mode waveguide theory
\cite[Sec.~13-3]{SnyderLove1983}, \cite[Ch.~7]{HausOpto}, which are typically satisfied in microwave and optical structures.
\begin{enumerate}
    \item[A1] \textit{(Modal separation)}
    Guided modes are non-degenerate and well separated in propagation constants, i.e.,
    $|\beta_m-\beta_{m'}| \gg \max_{j\in\mathcal{M}}|\kappa_j|$ holds $\forall m\neq m'$.
    This condition is typically satisfied for non-degenerate TE/TM modes in metallic and dielectric waveguides \cite{CollinFieldTheory}.

    \item[A2] \textit{(Weak inter-mode transfer)}
    The perturbation introduced by the PA does not induce noticeable energy transfer between guided modes, 
    i.e., $\left|\frac{\kappa_m \kappa_{m'}^\ast}{\beta_m-\beta_{m'}}\right|\ll |\kappa_m|$, $\forall m\neq m'$.
    Here, $|\kappa_m \kappa_{m'}^\ast|$ denotes the PA-mediated mode-to-mode coupling strength, whereas
    $|\beta_m-\beta_{m'}|$ quantifies the phase mismatch to suppress such transfer, which is generally satisfied under weak perturbations \cite{HardyStreiferOsinski1986}.
\end{enumerate}

\subsubsection{Reduction to $M$ Independent Two-Mode Subsystems}
We begin with the slowly varying envelope CMEs.
Define auxiliary envelope variables: 
\begin{align}\label{slowly_varying_envolope}
    a_m(\xi)\triangleq A_m(\xi)e^{+j\beta_m \xi},\qquad
    b(\xi)\triangleq B(\xi)e^{+j\beta^{\mathrm{PA}}\xi}.
\end{align}
Substituting \eqref{slowly_varying_envolope} into \eqref{eq:CME_env_A} and applying the chain rule yields
\begin{align}
    \frac{d a_m}{d\xi}
    &= e^{+j\beta_m\xi}\!\left(\frac{dA_m}{d\xi}+j\beta_m A_m\right)
     = -j\kappa_m B(\xi)e^{+j\beta_m\xi} \nonumber\\
    &= -j\kappa_m b(\xi)e^{j(\beta_m-\beta^{\mathrm{PA}})\xi},
    \qquad \forall m\in\mathcal{M}.
\end{align}
Applying the same transformation to $b(\xi)$, we have 
\begin{align}
\frac{d a_m}{d\xi}
&= -j\kappa_m\, b(\xi)\, e^{-j\Delta\beta_m \xi},
\qquad \forall m\in\mathcal M,
\label{eq:env_CME_a}\\
\frac{d b}{d\xi}
&= -j\sum_{m\in\mathcal M}\kappa_m^\ast\, a_m(\xi)\, e^{+j\Delta\beta_m \xi}.
\label{eq:env_CME_b}
\end{align}
For an arbitrary guided mode $i\in\mathcal{M}$, 
differentiating \eqref{eq:env_CME_a} gives
\begin{align}
\frac{d^2 a_i}{d\xi^2}
&= -j\kappa_i\frac{d}{d\xi}\!\left(b(\xi)e^{-j\Delta\beta_i\xi}\right)\nonumber\\
&= -j\kappa_i\left(
\frac{db}{d\xi}e^{-j\Delta\beta_i\xi}
-j\Delta\beta_i\, b(\xi)e^{-j\Delta\beta_i\xi}
\right).
\label{eq:an_second}
\end{align}
From \eqref{eq:env_CME_a}, we have $j\kappa_i b(\xi)e^{-j\Delta\beta_i\xi}= \frac{da_i}{d\xi}$.
Substituting \eqref{eq:env_CME_b} into \eqref{eq:an_second}, we can obtain
\begin{equation}
\frac{d^2 a_i}{d\xi^2}
\!+\! j\Delta\beta_i\frac{d a_i}{d\xi}
\!+\!|\kappa_i|^2 a_i(\xi)
\!+\!\sum_{m\neq i}\!
\kappa_i\kappa_m^\ast
a_m(\xi)e^{j\Delta\beta_{i,m}\xi}
=0,
\label{eq:coupling_structure}
\end{equation}
where $\Delta\beta_{i,m}\triangleq \Delta\beta_m-\Delta\beta_i=\beta_i-\beta_m$.
Equation \eqref{eq:coupling_structure} reveals two qualitatively different mechanisms.
The \emph{direct term} $|\kappa_i|^2a_i(\xi)$ captures the dominant first-order interaction between guided mode $i$ and the PA.
The remaining terms are PA-mediated \emph{cross interactions}, where energy transfers from guided mode $m$ to the PA and then to mode $i$ \cite{HardyStreiferOsinski1986}.

Under Assumption (A1), the phase-mismatch $\Delta\beta_{i,m}\triangleq \beta_i-\beta_m\neq 0$ makes
$e^{j\Delta\beta_{i,m}\xi}$ rapidly oscillatory. Let
$I_{i,m}(\xi)\triangleq \kappa_i\kappa_m^\ast a_m(\xi)$.
Applying a single integration-by-parts step yields 
\begin{align}
\int_0^L I_{i,m}(\xi)e^{j\Delta\beta_{i,m}\xi}d\xi
&=
\Big[\frac{I_{i,m}(\xi)}{j\Delta\beta_{i,m}}e^{j\Delta\beta_{i,m}\xi}\Big]_0^L
\\&-\frac{1}{j\Delta\beta_{i,m}}\int_0^L I'_{i,m}(\xi)e^{j\Delta\beta_{i,m}\xi}d\xi,
\nonumber
\end{align}
and hence
$
\Big| \int_0^L
\kappa_i\kappa_m^\ast a_m(\xi)e^{j(\beta_i-\beta_m)\xi}\,d\xi
\Big| \leq 
\Big(|a_m(0)|+ \allowbreak |a_m(L)|+\int_0^L\Big|\frac{d a_m}{d\xi}\Big| d\xi \Big) \frac{|\kappa_i\kappa_m|}{|\beta_i-\beta_m|}$. 
Under the slowly-varying-envelope (weak-coupling) regime, 
we have
\begin{equation}
    \left|
    \int_0^L
    \kappa_i\kappa_m^\ast a_m(\xi)e^{j(\beta_i-\beta_m)\xi}\,d\xi
    \right|
    \!=\!
    \mathcal{O}\!\left(\frac{|\kappa_i\kappa_m|}{|\beta_i-\beta_m|}\right),
    ~ m\neq n.
    \label{eq:cross_term_order}
\end{equation} 
From Assumption (A2), i.e., 
$\frac{|\kappa_i\kappa_m|}{|\beta_i-\beta_m|}\ll |\kappa_i|$ for $m\neq i$,
all cross-interaction terms are negligible in a first-order CMT approximation.
Hence, \eqref{eq:coupling_structure} decouples across $m$ in the weak-coupling regime, and the original multi-mode CME system approximately block-diagonalizes into
$M$ independent two-mode subsystems $(a_m(\xi),b_m(\xi))$, each governed by
\begin{align}
    \frac{d a_m}{d\xi}
    &= -j\kappa_m b_m(\xi)e^{-j\Delta\beta_m\xi},
    \label{eq:two_mode_a}\\
    \frac{d b_m}{d\xi}
    &= -j\kappa_m^\ast a_m(\xi)e^{+j\Delta\beta_m\xi},
    \label{eq:two_mode_b}
\end{align}
which is the canonical two-mode CMT model \cite[Sec.~13-3]{SnyderLove1983}.

\subsubsection{Closed-Form Expression}
Define symmetrized envelopes
\[
\tilde a_m(\xi)\triangleq a_m(\xi)e^{+j\Delta\beta_m\xi/2},\qquad
\tilde b_m(\xi)\triangleq b_m(\xi)e^{-j\Delta\beta_m\xi/2}.
\]
Then \eqref{eq:two_mode_a} and \eqref{eq:two_mode_b} can be written as
\[
\frac{d}{d\xi}
\begin{bmatrix}
\tilde a_m\\[0.3em]
\tilde b_m
\end{bmatrix}
=
-j
\begin{bmatrix}
\Delta\beta_m/2 & \kappa_m\\
\kappa_m^\ast & -\Delta\beta_m/2
\end{bmatrix}
\begin{bmatrix}
\tilde a_m\\[0.3em]
\tilde b_m
\end{bmatrix},
\]
whose eigenvalues are $\pm\phi_m$, $\phi_m\triangleq \sqrt{|\kappa_m|^2+(\Delta\beta_m/2)^2}$.
Solving over $\xi\in[0,L]$ with $b_m(0)=0$ yields \cite{SnyderLove1983} 
\begin{equation}
b(L)=\sum_{m\in\mathcal{M}} b_m(L)
=\sum_{m\in\mathcal{M}}\frac{\kappa_m}{\phi_m}
\sin(\phi_m L)\,
e^{-j\Delta\beta_m L/2}a_m(0), 
\label{eq:bm_solution}
\end{equation}
where $a_m(0)=c_m e^{-j\beta_m x}$ is the incident signal at the entrance of the PA.
Substituting into \eqref{eq:bm_solution} and absorbing $e^{-j\Delta\beta_m L/2}$ by
$\kappa_m \leftarrow \kappa_m e^{-j\Delta\beta_m L/2}$, we obtain
\begin{equation}
    b(L)
    = \sum_{m\in\mathcal{M}}
    \frac{\kappa_m}{\phi_m}\sin(\phi_m L) e^{-j\beta_m x_n} c_m.
\end{equation}
By definition, coefficient $\eta_m$ satisfies $b_m(L)=\eta_m e^{-j\beta_m x}c_m$. Hence,
$\eta_m
=
\frac{\kappa_m}{\phi_m}
\sin(\phi_m L).$
This completes the proof.

\section{Proof of Corollary~\ref{corolllary:stationary_D}}\label{proof:corolllary:stationary_D}
Let $L(x)\triangleq \log D(x)=\sum_{k=1}^2\frac12\log\big((x-a_k)^2+z_k^2\big)$. Since $D(x)>0$, we have $D'(x)=0 \Leftrightarrow L'(x)=0$.
The first-order derivative is given by
\begin{equation}
L'(x)=\frac{x-a_1}{(x-a_1)^2+z_1^2}+\frac{x-a_2}{(x-a_2)^2+z_2^2}.
\end{equation}
With $u=x-a_1\in(0,a_2-a_1)$ (so $x-a_2=u-(a_2-a_1)$), clearing denominators yields the cubic
\[
2u^3-3(a_2-a_1)u^2+((a_2-a_1)^2+z_1^2+z_2^2)u-(a_2-a_1)z_1^2=0.
\]
Applying the shift $u=t+\frac d2$ produces the cubic $t^3+pt+q=0$ with $(p,q)$ as in the corollary, 
whose discriminant is $\Delta_{D}=(\frac q2)^2+(\frac p3)^3$. Hence it has exactly one real root if $\Delta_{D}\ge0$ and three distinct real roots if $\Delta_{D}<0$. 
Substituting Cardano's formula for $\Delta_{D}\ge0$ and the trigonometric form for $\Delta_{D}<0$ into $x=a_1+\frac{a_2-a_1}{2}+t$ 
gives the stated $\tau_s$. This completes the proof. 
\qed

\bibliographystyle{IEEEtran}
\bibliography{reference}

\end{document}